% !TeX spellcheck = en_GB
%\documentclass[runningheads]{llncs}
\documentclass[manyauthors]{fundam}

\setcounter{page}{1}
\publyear{2021}
\papernumber{0001}
\volume{178}
\issue{1}

\usepackage[T1]{fontenc}
\usepackage{graphicx}

% Custom Packages
% ===============
%\usepackage[english=american]{csquotes}
\usepackage{graphics}
\usepackage[ruled,linesnumbered]{algorithm2e}
\usepackage{amsmath}
\usepackage{tabularx}
\usepackage{environ}
\usepackage{amssymb}
\usepackage{todonotes}
%\usepackage{tikz}
%\usetikzlibrary{backgrounds}
%\usepackage{caption}
%\usepackage{subcaption}
%\usetikzlibrary{positioning}
%\usepgflibrary{arrows}
%\usetikzlibrary{arrows,decorations.pathreplacing,calligraphy}
%\usepackage{float}
%\usepackage[bibliography=common,appendix=strip]{apxproof}

\usepackage{thmtools} 
\usepackage{thm-restate}

\usepackage{mathtools}
\usepackage[super]{nth}
\usepackage{todonotes}
\usepackage{hyperref}
% If you use the hyperref package, please uncomment the following two lines
% to display URLs in blue roman font according to Springer's eBook style:
\usepackage{color}

\usepackage[capitalise,nameinlink]{cleveref}
\crefname{problem}{Problem}{Problems}
\crefname{Theorem}{Theorem}{Theorems}
\crefname{Claim}{Claim}{Claims}
\crefname{Remark}{Remark}{Remarks}

\makeatletter
\newcommand{\problemtitle}[1]{\gdef\@problemtitle{#1}}% Store problem title
\newcommand{\probleminput}[1]{\gdef\@probleminput{#1}}% Store problem input
\newcommand{\problemquestion}[1]{\gdef\@problemquestion{#1}}% Store problem question
\NewEnviron{problemdescription}{
  \problemtitle{}\probleminput{}\problemquestion{}% Default input is empty
  \BODY% Parse input
  \par\addvspace{.5\baselineskip}
  \noindent
  \begin{tabularx}{\textwidth}{@{\hspace{\parindent}} l X c}
    \hline
    \multicolumn{2}{@{\hspace{\parindent}}l}{\@problemtitle} \\% Title
    \textbf{Input:} & \@probleminput \\% Input
    \textbf{Question:} & \@problemquestion\\% Question
    \hline
  \end{tabularx}
  \par\addvspace{.5\baselineskip}
}
\makeatother

%
% Makros
% ======
\DeclareMathOperator{\al}{alph}

\DeclareMathOperator{\ETH}{\textsf{ETH}}

\DeclareMathOperator{\OV}{\textsf{OV}}
\DeclareMathOperator{\OVH}{\textsf{OVH}}
\DeclareMathOperator{\pmas}{p-MAS}
\DeclareMathOperator{\poly}{poly}

\DeclareMathOperator{\psas}{p-SAS}
\DeclareMathOperator{\nextpos}{next}

\DeclareMathOperator{\sas}{SAS}

\DeclareMathOperator{\SatProb}{\textsc{CNF-Sat}}
\DeclareMathOperator{\SETH}{\textsf{SETH}}
\DeclareMathOperator{\smallO}{o}

\DeclareMathOperator{\Subseq}{Subseq}

\newcommand{\pSubSeqMatch}{\mathtt{pSubSeqMatch}}
\newcommand{\SubSeqMatch}{\mathtt{SubSeqMatch}}
\newcommand{\ItSubSeqMatch}{\mathtt{ItSubSeqMatch}}
\newcommand{\BestItSubSeqMatch}{\mathtt{BestItSubSeqMatch}}

\newcommand{\kpNonUniv}{\mathtt{kpNonUniv}}
\newcommand{\kpNonEquiv}{\mathtt{kpNonEquiv}}
\newcommand{\pWords}{\mathtt{partialWordsNonUniv}}
\newcommand{\pNonPSAS}{\mathtt{pNonPSAS}}
\newcommand{\pMAS}{\mathtt{pMAS}}

\newcommand{\xSubseq}[1]{#1$-$\Subseq}
\newcommand{\pSubseq}{\xSubseq{p}}
\def\nth#1{#1$^{\text{th}}$}
\newcommand{\len}[1]{|#1|}
\DeclareMathAlphabet{\mathbbb}{U}{bbold}{m}{n}

%\makeatletter
%\spn@wtheorem{numclaim}{Claim}{\itshape}{\rmfamily}
%\makeatother
%\renewcommand{\thenumclaim}{\Roman{numclaim}}
%\crefalias{claim}{lemma}
%\crefname{lemma}{Claim}{Claims}

\begin{document}

\title{Subsequences in Bounded Ranges: Matching and Analysis Problems}
\author{Maria Kosche\\ Georg-August University Göttingen \and
	Tore Koß\corresponding\\ Georg-August University Göttingen \and
	Florin Manea\\ Georg-August University Göttingen \and
	Viktoriya Pak\\ Georg-August University Göttingen}
\runninghead{M. Kosche, T. Koß, F. Manea, V. Pak}{Subsequences in Bounded Ranges}
% First names are abbreviated in the running head.
% If there are more than two authors, 'et al.' is used.
%
%\institute{Göttingen University, Germany\\
	\address{\{maria.kosche,tore.koss,florin.manea,viktoriya.pak\}@cs.uni-goettingen.de}
	\maketitle              % typeset the header of the contribution
\begin{abstract}
In this paper, we consider a variant of the classical algorithmic problem of checking whether a given word $v$ is a subsequence of another word $w$. More precisely, we consider the problem of deciding, given a number $p$ (defining a range-bound) and two words $v$ and $w$, whether there exists a factor $w[i:i+p-1]$ (or, in other words, a range of length $p$) of $w$ having $v$ as subsequence
(i.\,e., $v$ occurs as a subsequence in the bounded range $w[i:i+p-1]$). We give matching upper and lower quadratic bounds for the time complexity of this problem. Further, we consider a series of algorithmic problems in this setting, in which, for given integers $k$, $p$ and a word $w$, we analyse the set $\pSubseq_{k}(w)$ of all words of length $k$ which occur as subsequence of some factor of length $p$ of $w$. Among these, we consider the $k$-universality problem, the $k$-equivalence problem, as well as problems related to absent subsequences. Surprisingly, unlike the case of the classical model of subsequences in words where such problems have efficient solutions in general, we show that most of these problems become intractable in the new setting when subsequences in bounded ranges are considered. Finally, we provide an example of how some of our results can be applied to subsequence matching problems for circular words.
	
	\keywords{Subsequences \and Bounded Range \and Matching Problems \and Analysis Problems \and Algorithms \and Fine Grained Complexity}
\end{abstract}

% ===============================================
\section{Introduction}

A word $u$ is a subsequence of a string $w$ if there exist (possibly empty) strings $v_1, \ldots, v_{\ell+1}$ and $u_1, \ldots, u_\ell$ such that $u = u_1 \ldots u_\ell$ and $w = v_1 u_1 \ldots v_\ell u_\ell v_{\ell+1}$. In other words, $u$ can be obtained from $w$ by removing some of its letters. In this paper, we focus on words occurring as subsequences in bounded ranges of a word. \looseness=-1

The notion of subsequences appears in various areas of computer science. For instance, in automata theory, the theory of formal languages, and logics, it is used in connection with piecewise testable languages~\cite{simonPhD,Simon72,KarandikarKS15,CSLKarandikarS,journals/lmcs/KarandikarS19}, or subword order and downward closures~\cite{HalfonSZ17,KuskeZ19,Kuske20,Zetzsche16,BaumannGTZ22}. Naturally, subsequences appear in the area of combinatorics and algorithms on words~\cite{RigoS15,FreydenbergerGK15,LeroyRS17a,Rigo19,Seki12,Mat04,Salomaa05,Barker2020,kiel2022}, but they are also used for modelling concurrency~\cite{Riddle1979a,Shaw1978,BussSoltys2014}, as well as in database theory (especially in connection with \emph{event stream processing}~\cite{ArtikisEtAl2017,GiatrakosEtAl2020,ZhangEtAl2014}). Nevertheless, a series of classical, well-studied, and well-motivated combinatorial and algorithmic problems deal with subsequences. Some are stringology problems, such as {the longest common subsequence problem} \cite{DBLP:journals/tcs/Baeza-Yates91,DBLP:conf/fsttcs/BringmannC18,BringmannK18,AbboudEtAl2015,AbboudEtAl2014} or {the shortest common supersequence problem} \cite{Maier:1978}), but there are also problems related to the study of patterns in permutations, such as increasing subsequences or generalizations of this concept \cite{Erdos1935,Dilworth1950,BESPAMYATNIKH20007,Crochemore2010,Hunt,Fredman75,BiedlBCLMNS19,GawrychowskiMS22}. \looseness=-1

In general, one can split the algorithmic questions related to the study of subsequences in two large classes. The first class of problems is related to \emph{matching} (or searching), where one is interested in deciding whether a given word $u$ occurs as a subsequence in another (longer) word $w$. The second class contains \emph{analysis problems}, which are focused on the investigation of the sets $\Subseq_k(w)$ of all subsequences of length $k$ of a given string $w$ (of course, we can also remove the length restriction, and investigate the class of all subsequences of $w$). In this setting, one is interested, among other problems, in deciding the {\em $k$-universality problem}, i.\,e., whether $\Subseq_k(w)=\Sigma^k$, where $w$ and $k$ are given, or the {\em equivalence problem}, i.\,e., whether $\Subseq_k(w)=\Subseq_k(u)$, where $w,u$ and $k$ are given. In the general case of subsequences, introduced above, the matching problem can be solved trivially by a greedy algorithm. The case of analysis problems is more interesting, but still well-understood (see, e.\,g., \cite{Pin2004,Pin2019}). For instance, the equivalence problem, which was introduced by Imre Simon in his PhD thesis \cite{simonPhD}, was intensely studied in the combinatorial pattern matching community (see \cite{TCS::Hebrard1991,garelCPM,SimonWords,DBLP:conf/wia/Tronicek02,CrochemoreMT03,KufMFCS} and the references therein). This problem was optimally solved in 2021 \cite{GawrychowskiEtAl2021}. The universality problem was also intensely studied (see \cite{Barker2020,DayFKKMS21} and the references therein); to this end, we will also recall the work on absent subsequences in words \cite{Kosche2021}, where the focus is on minimal strings (w.\,r.\,t. length or the subsequence relation) which are not contained in $\Subseq(u)$. \looseness=-1

Getting closer to the topic of this paper, let us recall the following two scenarios related to the motivation for the study of subsequences, potentially relevant in the context of reachability and avoidability problems. Assume that $w$ is some string (or stream) we observe, which may represent, on the one hand, the trace of some computation or, on the other hand, and in a totally different framework, the DNA-sequence describing some gene. Deciding whether a word $v$ is a subsequence of $w$ can be interpreted as deciding, in the first case, whether the events described by the trace $v$ occurred during the longer computation described by $w$ in the same order as in $v$, or, in the second case, if there is an alignment between the sequence of nucleotides $v$ and the longer sequence $w$. However, in both scenarios described above, it seems unrealistic to consider occurrences of $v$ in $w$ where the positions of $w$ matching the first and last symbol of $v$, respectively, are very far away from each other. It seems indeed questionable, for instance, whether considering an alignment of DNA-sequences $v$ and $w$ where the nucleotides of $v$ are spread over a factor of $w$ which is several times longer than $v$ itself is still meaningful. Similarly, when observing a computation, one might be more interested in its recent history (and the sequences of events occurring there), rather than analysing the entire computation. Moreover, the fact that in many practical scenarios (including those mentioned above) one has to process streams, which, at any moment, can only be partly accessed by our algorithms, enforces even more the idea that the case where one is interested in subsequences occurring arbitrarily in a given string is less realistic and less useful than the case where one is interested in the subsequences occurring in bounded ranges of the respective string (which can be entirely accessed and processed at any moment by our algorithms).\looseness=-1

Hence, we consider in this paper the notion of $p$-subsequence of a word. More precisely, a word $v$ is a $p$-subsequence of $w$ if there exists a factor (or bounded range) $w[i:i + p - 1]$ of $w$, of length $p$, such that $v$ is a subsequence of $w[i:i + p - 1]$. In this framework, we investigate both matching and analysis problems. \looseness=-1

{\em Our results.} With respect to the matching problem, we show that checking, for a given integer $p$, whether a word $v$ is a $p$-subsequence of another word $w$ can be done in $\mathcal{O}(|w||v|)$ time and $\mathcal{O}(|v|)$ space, and show that this is optimal conditional to the Orthogonal Vectors Hypothesis. With respect to the analysis problem, we show that the problem of checking, for given integers $k$ and $p$ and word $w$, whether there exists a word $v$ of length $k$ which is not a $p$-subsequence of $w$ is NP-hard. Similarly,  checking, for given integer $p$ and words $v$ and $w$, whether the sets of $p$-subsequences of $w$ and $v$ are not equal is also NP-hard. These results are complemented by conditional lower bounds for the time complexity of algorithms solving them. Several results related to the computation of absent $p$-subsequences are also shown. Interestingly, we show that checking if a word is a shortest absent $p$-subsequence of another word is NP-hard, while checking if a word is a minimal absent $p$-subsequence of another word can be done in quadratic time (and this is optimal conditional to the Orthogonal Vectors Hypothesis). We end the paper with a series of results related to subsequences in circular words. Among other results, we also close an open problem from \cite{NagyProperties} that is related to the computation of minimal representatives of circular words. \looseness=-1

{\em Related works.} Clearly, considering properties of bounded ranges (or factors of bounded length) in words can be easily related to the study of sliding window algorithms \cite{GanardiHKLM18,GanardiHL16,GanardiHL18,GanardiHL18b,GanardiHLS19} or with algorithms in the streaming model \cite{BathieS21,DudekGGS22}. For our algorithmic results, we discuss their relation to results obtained in those frameworks. Moreover, the notion of subsequences with gap constraints was recently introduced and investigated \cite{Day2022}. In that case, one restricts the occurrences of a subsequence $v$ in a word $w$, by placing length (and regular) constraints on the factors of $w$ which are allowed to occur between consecutive symbols of the subsequence when matched in $w$. Our framework (subsequences in bounded range) can be, thus, seen as having a general length constraint on the distance between the position of $w$ matching the first symbol of $v$ and the position of $w$ matching the last symbol of $v$, in the occurrences of $v$ inside $w$. \looseness=-1

{\em Structure of the paper.} We first give a series of preliminary definitions. Then we discuss the matching problem. Further, we discuss the analysis problems and the problems related to absent subsequences. We then discuss the case of subsequences of circular words. We end with a conclusions section. \looseness=-1

% ===============================================

\section{Basic Definitions}\label{sec:defs}

Let $\mathbb{N}$ be the set of natural numbers, including $0$. For $m, n \in \mathbb{N}$, we define the range (or interval) of natural numbers lower bounded by $m$ and upper bounded by $n$ as $[m:n] = \{m, m+1, \ldots, n\}$. An alphabet $\Sigma$ is a non-empty finite set of symbols (called letters). A {\em string (or word)} is a finite sequence of letters from $\Sigma$, thus an element of the free monoid $\Sigma^\ast$. Let $\Sigma^+ = \Sigma^\ast \setminus \{\varepsilon\}$, where $\varepsilon$ is the empty string. The {\em length} of a string $w \in \Sigma^\ast$ is denoted by $\len{w}$. The \nth{$i$} letter of $w \in \Sigma^\ast$ is denoted by  $w[i]$, for $i \in [1:\len w]$. Let $\len{w}_a = |\{i \in [1:\len w] \mid w[i] = a \}|$; let $\al(w) = \{x \in \Sigma \mid \len w_x > 0 \}$ be the smallest subset $S \subseteq \Sigma$ such that $w \in S^\ast$. For $m, n \in \mathbb{N}$, with $m\leq n$, we define the range (or factor) of $w$ between positions $m$ and $n$ as $w[m:n] = w[m] w[m+1] \ldots w[n]$. %; clearly, the length of the range $w[m:n]$ is $n-m+1$. 
If $m>n$, then $w[m:n]$ is the empty word. Also, by convention, if $m<1$, then $w[m:n]=w[1:n]$, and if $n>|w|$, then $w[m:n]=w[m:|w|]$. A factor $u=w[m:n]$ of $w$ is called a  {\em prefix} (respectively, {\em suffix}) of $w$ if $m=1$ (respectively, $n=|w|$). \looseness=-1

The powers of a word $w$ are defined as: $w^0=\varepsilon$ and $w^{k+1}=ww^k$, for $k\geq 0$. We define $w^\omega$ as the right infinite word which has $w^n$ as prefix for all $n\geq 0$. The positive integer $p\leq |w|$ is a period of a word $w$ if $w$ is a prefix of $w[1:p]^\omega$.
Let $w = w_1 \ldots w_n$
(for some $w_1, \ldots , w_n \in \Sigma$) and $p \in \mathbb{N}$.
Then $w^{\frac{p}{n}} = w^{\lfloor{\frac{p}{n}}\rfloor} w'$,
where $w' = w_1 \ldots w_{(p \bmod n)}$.

We recall the notion of subsequence.
\begin{definition}
A word $v$ is a subsequence of length $k$ of $w$ (denoted $v\leq w$), where $\len{w} = n$, if there exist positions $1 \leq i_1 < i_2 < \ldots < i_k \leq n$, such that $v = w[i_1] w[i_2] \cdots w[i_k]$. The set of all subsequences of $w$ is denoted by $\Subseq(w)$. 
\end{definition}

When $v$ is not a subsequence of $w$ we also say that $v$ is {\em absent} from $w$. 
The main concept discussed here is that of $p$-subsequence, introduced next. 
\begin{definition}
1. A word $v$ is a $p$-subsequence of $w$ (denoted $v \leq_p w$) if there exists $i \leq \len{w} - p + 1$ such that $v$ is a subsequence of $w[i:i + p - 1]$.\\
2. For $p\in\mathbb N$ and $w\in\Sigma^*$, we denote the set of all $p$-subsequences of $w$ by $\pSubseq(w) = \{v\in \Sigma^*\mid v \leq_p w\}$. Furthermore, for $k\in \mathbb N$, we denote the set of all $p$-subsequences of length $k$ of $w$ by $\pSubseq_{k}(w)$.
\end{definition}

Extending the notions of absent subsequences introduced in \cite{Kosche2021}, we now define the notion of (shortest and minimal) absent $p$-subsequences in a word.

\begin{definition}
The word $v$ is an absent $p$-subsequence of $w$ if $v \notin \pSubseq(w)$. We also say $v$ is $p$-absent from $w$.
The word $v$ is a $\psas$ (shortest absent $p$-subsequence) of $w$ if $v$ is an absent $p$-subsequence of $w$ of minimal length.
The word $v$ is a $\pmas$ (minimal absent $p$-subsequence) of $w$ if $v$ is an absent $p$-subsequence of $w$ but all subsequences of $v$ are $p$-subsequences of $w$.
\end{definition}
Note that, in general, any shortest absent $|w|$-subsequence of a word $w$ (or, simply, shortest absent subsequence of $w$, denoted  $\sas$) has length $\iota(w) + 1$, where $\iota(w)=\max\{k\mid \al(w)^k\subseteq \Subseq(w)\}$ is the universality index of $w$ \cite{Barker2020}.

\smallskip

\noindent\emph{Computational Model.} In general, the problems we discuss here are of algorithmic nature. The computational model we use to describe our algorithms is the standard unit-cost RAM with logarithmic word size: for an input of size $N$, each memory word can hold $\log N$ bits. Arithmetic and bitwise operations with numbers in $[1:N]$ are, thus, assumed to take $\mathcal{O}(1)$ time.
  
In all the problems, we assume that we are given a word $w$ or two words $w$ and $u$, with $|w|=n$ and $|v|=m$ (so the size of the input is $N=n+m$), over an alphabet $\Sigma=\{1,2,\ldots,\sigma\}$, with $2\leq |\Sigma|=\sigma\leq n+m$. That is, we assume that the processed words are sequences of integers (called letters or symbols), each fitting in $\mathcal{O}(1)$ memory words. This is a common assumption in string algorithms: the input alphabet is said to be {\em an integer alphabet}. For more details see, e.\,g.,~\cite{crochemore}. 

Our algorithmic results (upper bounds) are complemented by a series of lower bounds. In those cases, we show that our results hold already for the case of constant alphabets. That is, they hold already when the input of the problem is restricted to words over an alphabet $\Sigma=\{1,2,\ldots,\sigma\}$, with $\sigma\in \mathcal{O}(1)$. 

\smallskip

\noindent\emph{Complexity Hypotheses.} 
As mentioned, we are going to show a series of conditional lower bounds for the time complexity of the considered problems. Thus, we now recall some standard computational problems and complexity hypotheses regarding them, respectively, on which we base our proofs of lower bounds.

The \emph{Satisfiability problem for formulas in conjunctive normal form}, in short $\SatProb$, gets as input a Boolean formula $F$ in conjunctive normal form as a set of clauses $F = \{c_1, c_2, \ldots, c_m\}$ over a set of variables $V = \{v_1, v_2, \ldots, v_n\}$, i.\,e., for every $i \in [m]$, we have $c_i \subseteq \{v_1, \neg v_1, \ldots, v_n, \neg v_n\}$. The question is whether $F$ is satisfiable. 
%, i.\,e., whether there is an assignment $\pi : V \to \{0, 1\}$ that makes at least one literal of each clause $c_i$ true. 
By $k$-$\SatProb$, we denote the variant where $|c_i| \leq k$ for all $i \in [m]$.\looseness=-1

The \emph{Orthogonal Vectors problem} ($\OV$ for short) gets as input two sets $A, B$ each containing $n$ Boolean-vectors of dimension~$d$, where $d\in \omega(\log n)$. The question is whether there exist two vectors $\vec{a} \in A$ and $\vec{b} \in B$ which are orthogonal, i.\,e., $\vec{a}[i] \cdot \vec{b}[i] = 0$ for every $i \in [1:d]$.

We shall use the following algorithmic hypotheses based on $\SatProb$ and $\OV$
that are common for obtaining conditional lower bounds in fine-grained complexity.
In the following, $\poly$ is any fixed polynomial function: 

\smallskip
\noindent -- \emph{Exponential Time Hypothesis}  ($\ETH$)~\cite{ImpagliazzoEtAl2001,LokshtanovEtAl2011}: $3$-$\SatProb$ cannot be solved in time $2^{\smallO(n)} \poly(n + m)$.
	%, where 
	%$n$ is the number of variables and 
	%$\poly$ is any fixed polynomial function.
	
\noindent -- \emph{Strong Exponential Time Hypothesis} ($\SETH$)~\cite{ImpagliazzoPaturi2001,Williams2015}: For every $\epsilon > 0$ there exists $k$ such that $k$-$\SatProb$ cannot be decided in $\mathcal{O}(2^{n(1-\epsilon)} \poly(n))$ time. %, where 
	%$n$ is the number of variables and 
	%$\poly$ is any fixed polynomial function.

The following result, which essentially formulates the Orthogonal Vectors Hypothesis ($\OVH$), can be shown (see~\cite{Bringmann2014,Bringmann2019,Williams2015}).
\begin{lemma}
$\OV$ cannot be solved in $\mathcal{O}(n^{2-\epsilon} poly(d))$ time for any $\epsilon > 0$, unless $\SETH$ fails. 
\end{lemma}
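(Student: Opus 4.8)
The plan is to reduce the problem of deciding $\OV$ to that of solving $\SatProb$, thereby showing that a fast algorithm for $\OV$ would yield a fast algorithm for $k$-$\SatProb$ and thus refute $\SETH$. First I would recall the standard split-and-list paradigm: given a $\SatProb$ instance with $n$ variables and $m$ clauses, I partition the variable set $V$ into two halves $V_1 = \{v_1, \ldots, v_{n/2}\}$ and $V_2 = \{v_{n/2+1}, \ldots, v_n\}$, and enumerate all $2^{n/2}$ partial assignments to each half. For each partial assignment $\alpha$ to $V_1$ I construct a Boolean vector $\vec{a}_\alpha \in \{0,1\}^m$ of dimension $d = m$, and symmetrically a vector $\vec{b}_\beta$ for each assignment $\beta$ to $V_2$. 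This produces two sets $A$ and $B$, each containing $N = 2^{n/2}$ vectors of dimension $d = m$.

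The key step is designing the coordinate encoding so that orthogonality of $\vec{a}_\alpha$ and $\vec{b}_\beta$ corresponds exactly to the combined assignment $\alpha \cup \beta$ satisfying $F$. The natural choice is to let coordinate $i \in [1:m]$ of $\vec{a}_\alpha$ be $0$ if the partial assignment $\alpha$ already satisfies clause $c_i$ (by setting some literal of $c_i$ restricted to $V_1$ to true), and $1$ otherwise; the vector $\vec{b}_\beta$ is defined analogously with respect to $V_2$. Then $\vec{a}_\alpha[i] \cdot \vec{b}_\beta[i] = 0$ holds iff clause $c_i$ is satisfied by $\alpha$ or by $\beta$, i.\,e., by the joint assignment. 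Hence the two vectors are orthogonal iff every clause is satisfied, i.\,e., iff $\alpha \cup \beta$ is a satisfying assignment. So $F$ is satisfiable iff the constructed $\OV$ instance has an orthogonal pair.

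The main obstacle, and the point that requires care, is matching the parameter regimes so that the reduction actually transfers the hardness. I would argue by contradiction: suppose $\OV$ could be solved in $\mathcal{O}(N^{2-\epsilon}\poly(d))$ time for some fixed $\epsilon > 0$. Running the reduction on a $k$-$\SatProb$ instance yields $N = 2^{n/2}$ and $d = m \le \binom{2n}{k} = \poly(n)$, and the reduction itself runs in $\mathcal{O}(N \cdot \poly(n))$ time. The assumed $\OV$ algorithm then decides satisfiability in time $\mathcal{O}\!\left((2^{n/2})^{2-\epsilon}\poly(n)\right) = \mathcal{O}\!\left(2^{n(1-\epsilon/2)}\poly(n)\right)$, which beats the $\SETH$ bound for every $k$ (since the savings $\epsilon/2$ is uniform and does not depend on $k$). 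I must also verify that the dimension constraint $d \in \omega(\log N)$ in the definition of $\OV$ is respected; since $d = m$ and for $\SETH$ we may assume $m$ grows superlogarithmically in $2^{n/2}$ — or pad the instance harmlessly with additional coordinates — this technical condition is satisfied. This contradiction with $\SETH$ establishes the claim.
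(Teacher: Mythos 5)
Your proposal is correct, and it is the standard split-and-list reduction (due to Williams) from $k$-$\SatProb$ to $\OV$ — which is exactly the argument underlying the references the paper cites for this lemma, since the paper itself states the result without proof and points to the literature. Your treatment of the key points — the clause-wise $0/1$ encoding so that orthogonality means every clause is satisfied by the joint assignment, the uniform savings $\epsilon/2$ that holds for every $k$ (so the contradiction with $\SETH$ is genuine), and the padding fix for the technical condition $d \in \omega(\log N)$ — is sound.
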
\ \\

\section{Matching Problems}

We first consider the following problem.
\begin{problemdescription}\label{p:ssqmatch}
  \problemtitle{Subsequence Matching in Bounded Range, $\pSubSeqMatch$}
  \probleminput{Two words $u$ and $w$ over $\Sigma$ and an integer $p$, with $\len{u} = m$ and $\len{w} = n$, and $m\leq p\leq n$.}
  \problemquestion{Decide whether $u \leq_p w$.}
\end{problemdescription}

%\begin{problem}[Subsequence matching]\label{p:ssqmatch}
%	Given words $u$ and $w$ and integer $p$
%	(with $\len u = m$ and $\len w = n$),
%	decide whether $v \leq_p w$.
%\end{problem} 

%We first obtain the following result.
\begin{theorem}\label[Theorem]{thm:algoMatching}
$\pSubSeqMatch$ can be solved in $\mathcal{O}(mn)$ time.
\end{theorem}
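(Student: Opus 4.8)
The plan is to solve $\pSubSeqMatch$ by computing, for each starting position $i$ of a length-$p$ range in $w$, whether $u$ is a subsequence of $w[i:i+p-1]$, but to do this cleverly so that the total work is $\mathcal{O}(mn)$ rather than $\mathcal{O}(mnp)$ (which would be the cost of naively running a greedy subsequence check on each of the $n-p+1$ ranges separately). The key quantity I would track is, for each position $j$ in $w$, the length of the longest prefix of $u$ that can be matched as a subsequence ending exactly at or before position $j$ within the current window; more precisely, the natural approach is to compute for each position the \emph{greedy match depth}.

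First I would recall the standard greedy fact: $u$ is a subsequence of a factor $w[i:i+p-1]$ if and only if the leftmost greedy embedding of $u$ into $w$ starting from position $i$ uses only positions that are all $\le i+p-1$. So for each fixed start $i$, define $f(i)$ to be the smallest position $t$ such that $u \le w[i:t]$ (and $f(i)=\infty$ if no such $t$ exists); then $u \le_p w$ holds iff there is some $i$ with $f(i) \le i+p-1$, i.e. $f(i) - i \le p-1$. Thus the whole problem reduces to computing (or testing against the bound) the values $f(i)$ for all $i \in [1:n]$. The crucial structural observation is that $f$ is monotone and can be computed incrementally: I would maintain the greedy matching pointers and use a two-pointer / sliding approach over $w$. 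As $i$ increases by one, the greedy embedding of $u$ can only shift rightward, so the total number of pointer advances across all starts is bounded by $\mathcal{O}(n)$ per symbol of $u$, giving $\mathcal{O}(mn)$ overall.

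A clean way to organize this is to precompute, using the classical $\nextpos$-type table, for each position $j$ and each letter $a$, the next occurrence of $a$ at or after $j$; this table has size $\mathcal{O}(n\sigma)$, but since we only ever query the $m$ letters of $u$, I would instead compute, directly by a right-to-left scan, the function $g(j)$ = the smallest $t \ge j$ such that $u \le w[j:t]$, by layering $m$ passes (one per symbol of $u$) or by a single dynamic-programming sweep that threads the greedy match. Concretely, let $\text{end}_r(j)$ be the earliest position at which a greedy match of the length-$r$ prefix $u[1:r]$ starting at $j$ completes; then $\text{end}_r(j) = \text{next}(\,\text{end}_{r-1}(j)+1,\, u[r]\,)$, and $f(i) = \text{end}_m(i)$. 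Computing all $\text{end}_r$ tables costs $\mathcal{O}(mn)$ time and, with careful reuse, $\mathcal{O}(n)$ space per layer. Finally I would output \textbf{yes} iff $\min_i (f(i)-i) \le p-1$.

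The main obstacle I anticipate is establishing the $\mathcal{O}(mn)$ bound while keeping the space reasonable and handling the boundary conventions (what happens when the greedy match runs past position $n$, and the stated convention that $w[m:n]=w[m:|w|]$ when $n>|w|$); these need to be folded correctly into the $\nextpos$ computation so that a match failing to complete within $w$ is treated as $f(i)=\infty$ and never satisfies $f(i)-i \le p-1$. The subsequence-matching logic itself (the greedy leftmost-embedding argument) is routine, but I expect the careful bookkeeping to show that processing all $n-p+1$ windows jointly costs only $\mathcal{O}(mn)$ — rather than re-matching from scratch in each window — to be the part that requires the most precise argument, most naturally carried out via the layered $\text{end}_r$ recurrence above together with a monotonicity argument for amortization.
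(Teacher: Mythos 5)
Your proposal is correct, but it is organized as the transpose of the paper's algorithm rather than the same construction. The paper anchors on the \emph{right} endpoint of the window and works in a streaming / sliding-window fashion: it scans $w$ left to right and maintains an array $A[1..m]$ where, after reading $w[t]$, $A[i]$ is the length of the shortest suffix of $w[t-p+1:t]$ containing $u[1:i]$ as a subsequence; each scanned letter costs $\mathcal{O}(m)$, and the window $w[t-p+1:t]$ is accepted iff $A[m]\le p$. You instead anchor on the \emph{left} endpoint: you compute, layer by layer over the prefixes of $u$, the earliest greedy completion positions $\mathrm{end}_r(j)$ via $\mathrm{end}_r(j)=\mathrm{next}(\mathrm{end}_{r-1}(j)+1,u[r])$, and accept iff $\min_i\bigl(f(i)-i\bigr)\le p-1$ where $f(i)=\mathrm{end}_m(i)$. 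Both are $\mathcal{O}(mn)$-time dynamic programs over essentially the same table of greedy-matching information, and both rest on the same leftmost-embedding fact, so your correctness argument goes through (including the boundary convention $f(i)=\infty$ when the match overruns $w$; note also that your layered version does not even need the monotonicity of $f$, since each layer is computed in $\mathcal{O}(n)$ directly). The trade-off is in what each organization buys: the paper's right-endpoint version uses only $\mathcal{O}(m)$ space and is a genuine sliding-window algorithm, which the paper then exploits to argue space optimality ($\mathcal{O}(\log p)$ bits when $u$ is constant) in the sliding-window model; your left-endpoint version needs $\Theta(n)$ space (a full layer plus a next-occurrence array) and offline access to $w$, but in exchange it reduces the problem to completely standard next-occurrence/greedy machinery with no per-letter update invariant to verify.
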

\begin{proof}
We present an algorithm which solves $\pSubSeqMatch$ in $\mathcal{O}(mn)$ time and works in a streaming fashion w.\,r.\,t. the word $w$. More precisely, our algorithm scans the letters of $w$ one by one, left to right (i.\,e., in the order $w[1], w[2], \ldots, w[n]$), and after scanning the letter $w[t]$, for $t\geq p$, it decides whether $u$ is a subsequence of the bounded range $w[t-p+1:t]$ (i.\,e., our algorithm works as a sliding window algorithm, with fixed window size, and the result for the currently considered algorithms is always calculated before the next letter is read). 

Let us explain how this algorithm works. We maintain an array $A[\cdot]$ with $m$ elements such that the following invariant holds. For $t\geq 0$: after the $t^{th}$ letter of $w$ is scanned and $A$ is updated, $A[i]$ is the length of the shortest suffix of $w[t-p+1:t]$ which contains $u[1:i]$ as a subsequence (or $A[i]=+\infty$ if $w[t-p+1:t]$ does not contain $u[1:i]$ as a subsequence). Note that, before reading the letter $w[t]$, for all $t\in [1:n]$, the array $A$, if correctly computed, has the property that $A[i]\leq A[i+1]$, for all $i\in [1:m]$. 

Initially, we set $A[i]=+\infty$ for all $i\in [1:m]$. 
Let us see how the elements of $A$ are updated when $w[t]$ is read. We first compute $\ell\gets |u|_{w[t]}$ and the positions $j_1, \ldots, j_\ell \in [1:m]$ such that $u[j_h]=w[t]$. Then, we compute an auxiliary array $B[\cdot]$ with $m$ elements, in which we set $B[j_h]\gets A[j_{h}-1]+1$, for $h\in [1:\ell]$. For $i\in [1:m] \setminus \{j_1,\ldots,j_\ell\}$ we set $B[i]\gets A[i]+1$. Intuitively, $B[i]$ is the length of the shortest suffix of $w[t-p:t]$ which contains $u[1:i]$ as a subsequence. Then, we update $A[i]$, for $i\in [1:m]$, by setting $A[i]\gets B[i]$ if $B[i]\leq p$ and $A[i]\gets +\infty$, otherwise.

After performing the update of the array $A[\cdot]$ corresponding to the scanning of letter $w[t]$, we decide that $u$ occurs in the window $w[t-p+1:t]$ if (and only if) $A[m]\leq p$. 

Let us show that the algorithm is correct. The stated invariant clearly holds before scanning the letters of $w$ (i.\,e., after $0$ letters were scanned). Assume that the invariant holds after $f-1$ letters of $w$ were scanned. Now, we will show that it holds after $f$ letters were scanned. Assume that, for some prefix $u[1:i-1]$ of $u$, we have that $w[f':f-1]$ is the shortest suffix of $w[f-p:f-1]$ which contains $u[1:i-1]$ as subsequence. Next, we scan letter $w[f]$. If $w[f]=u[i]$, then $w[f':f]$ is the shortest suffix of $w[f-p:f]$ which contains $u[1:i]$ as subsequence (otherwise, there would exist a suffix of $w[f-p:f-1]$ that is shorter than $w[f':f-1]$ which contains $u[1:i-1]$ as subsequence). So, it is correct to set $B[i]\gets A[i-1]+1$, when $w[f]=u[i]$. Otherwise, if $w[f]\neq u[i]$, we note that the shortest suffix of $w[f-p:f]$ which contains $u[1:i]$ as subsequence starts on the same position as the shortest suffix of $w[f-p:f-1]$ which contains $u[1:i]$ as subsequence. Therefore, $B[i]\gets A[i]+1$ is also correct (as we compute the length of these shortest suffixes w.\,r.\,t. the currently scanned position of $w$). Then, we simply update $A$  to only keep track of those suffixes of the currently considered range of size $p$, i.\,e., $w[f-p+1:f]$. 

The algorithm runs in $\mathcal{O}(nm)$ time. Indeed, for each scanned letter $w[f]$ we perform $\mathcal{O}(m)$ operations. Moreover, the space complexity of the algorithm is $\mathcal{O}(m)$, as we only maintain the arrays $A$ and $B$. So, the statement holds.
\end{proof}

As stated in the proof of \cref{thm:algoMatching}, the algorithm we presented can be seen as an algorithm in the sliding window model with window of fixed size $p$ (see \cite{GanardiHL16,GanardiHKLM18,GanardiHLS19}). More precisely, we scan the stream $w$ left to right and, when the $t^{th}$ letter of the stream is scanned, we report whether the window $w[t-p+1:p]$ contains $u$ as a subsequence. In other words, we report whether the string $w[t-p+1:p]$ is in the regular language $L_u=\{v\mid u\leq v\}$. The problem of checking whether the factors of a stream scanned by a sliding window are in a regular language was heavily investigated, see \cite{Ganardi19} and the references therein. In particular, from the results of \cite{GanardiHL16} it follows that, for a constant $u$ (i.\,e., $u$ is not part of the input), the problem of checking whether the factors of a stream scanned by a sliding window are in the language $L_u$ cannot be solved using $o(\log p)$ bits when the window size is not changing and equals $p$. We note that our algorithm is optimal from this point of view: if $u$ is constant and, thus, $m\in \mathcal{O}(1)$, our algorithm uses $\mathcal{O}(\log p)$ bits to store the arrays $A$ and $B$. 

We can show that our algorithm is also optimal (conditional to $\OVH$) also from the point of view of time complexity.
\begin{theorem}\label[Theorem]{thm:lowerBoundSubseq}
$\pSubSeqMatch$ cannot be solved in time $\mathcal{O}(n^h m^g)$, where $h+g= 2-\epsilon$ with $\epsilon>0$, unless $\OVH$ fails. 
\end{theorem}
\begin{proof}

Let $(A,B)$ be an instance of $\OV$ with $A=\{a_1,\ldots,a_n\}\subset \{0,1\}^d$ and $B=\{b_1,\ldots,b_n\}\subset \{0,1\}^d$. Furthermore let $v=(v[1],v[2],\ldots,v[d])$ for every $v\in A\cup B$.
We represent $a_i$ and $b_j$ by the strings
\begin{align*}
\psi_A(a_i) &=\psi_A(a_i[1])\psi_A(a_i[2])\cdots \psi_A(a_i[d]),\\
\psi_B(b_j)&=\psi_B(b_j[1])\psi_B(b_j[2])\cdots \psi_B(b_j[d])
\end{align*} 
where $\psi_A(x)=\begin{cases}01\#&\text{ if }x=0,\\ 00\#&\text{ if }x=1\end{cases}$ 
and  
$\psi_B(y)=y\#$ for $y\in\{0,1\}$. 

\begin{claim}\label[Claim]{lem:orth}
$a_i$ and $b_j$ are orthogonal if and only if $\psi_B(b_j)$ occurs in $\psi_A(a_i)$ as a subsequence.
\end{claim}
\begin{proof}
Since $\len{\psi_A(a_i)}_\#= \len{\psi_B(b_j)}_\# = d$ holds, $\psi_B(b_j)$ is a subsequence of $\psi_A(a_i)$ if and only if $\psi_B(b_j[k])$ occurs in $\psi_A(a_i[k])$ for all $k\in [1:d]$. We note that, for $x,y\in \{0,1\}$, $\psi_B(y)$ is absent from $\psi_A(x)$ if and only if $x=y=1$. Hence, $\psi_B(b_j)$ is a subsequence of $\psi_A(a_i)$ if and only if $a_i[k]\cdot b_j[k]=0$ for all $k\in [1:d]$. That is, $\psi_B(b_j)$ is a subsequence of $\psi_A(a_i)$ if and only if $a_i$ is orthogonal to $b_j$. 

\hspace{6cm}(End of the proof of \cref{lem:orth})\end{proof}
With $\psi_A$ and $\psi_B$ we construct two words $W,U\in\{0,1,\#,[,]\}^\ast$ representing $A$ and $B$ as follows: 
\begin{align*}
W &= [\psi_A(\mathbbb{1})][\psi_A(\mathbbb{0})][\psi_A(a_1)][\psi_A(\mathbbb{0})][\psi_A(a_2)][\psi_A(\mathbbb{0})]\ldots [\psi_A(a_n)][\psi_A(\mathbbb{0})][\psi_A(\mathbbb{1})]\\
U &= [\psi_B(\mathbbb{1})][\psi_B(b_1)][\psi_B(b_2)]\ldots [\psi_B(b_n)][\psi_B(\mathbbb{1})]
\end{align*}
where $\mathbbb{0}$ (respectively, $\mathbbb{1}$) stands for the all-zero (respectively, all-one) vector of size $d$. We will occasionally omit $\psi_A$ and $\psi_B$ and call $[\psi_A(v)]$ and $[\psi_B(v)]$ $[v]$-blocks or, more generally, $[\cdot]$-blocks if it is clear whether it occurs in $W$ or in $U$. As such, $[\psi_A(\mathbbb{0})]$ is called a zero-block in the following, while a $[v]$-block is called a non-zero-block if and only if $v\neq \mathbbb{0}$. Thus, the encodings of $U$ and $W$ include two $[\mathbbb 1]$-blocks at the beginning and at the end of $U$ and $W$, respectively, and a zero-block after each non-zero-block of $W$ (excluding the $[\mathbbb 1]$-block at the end).

\begin{remark}\label[Remark]{rem:brack}
Since $U$ starts with the letter $[$, if $U$ is a subsequence of any bounded range of length $|W|$ of $W^2$, then $U$ is a subsequence of a bounded range of length $|W|$ of $W^2$ starting with $[$.
\end{remark}

Next we show that the instance $(A,B)$ of $\OV$ is accepted if and only if the instance $u=U$, $w=W^2$ and $p=|W|$ of $\pSubSeqMatch$ is accepted.

\begin{claim}\label[Claim]{lem:orth-subs}
There are orthogonal vectors $a_i\in A$ and $b_j\in B$ if and only if $U$ is a $|W|$-subsequence of $W^2$.
\end{claim}
\begin{proof}
Firstly, suppose $a_i$ and $b_j$ are orthogonal. Then $[\psi_B(b_j)]$ occurs in $[\psi_A(a_i)]$. Since $|W| = (2n+3)\cdot|[\psi_A(v)]|$ for any $v\in\{0,1\}^d$, we can choose a bounded range (until the end of this proof the reader may safely assume every bounded range to be a bounded range of length $|W|$ of $W^2$) containing $2n+2$ $[\cdot]$-blocks around $[\psi_A(a_i)]$. Furthermore, every bounded range starting with $[$ contains exactly $n+1$ zero-blocks. Hence, we choose a bounded range containing $j$ zero-blocks to the left of $[\psi_A(a_i)]$ and $n-j+1$ zero-blocks to the right of $[\psi_A(a_i)]$. If $j\leq i$, we match the $[b_j]$-block in $U$ against the first occurrence of the $[a_i]$-block in $W^2$ and choose the bounded range starting at the first occurrence of $[\psi_A(a_{i-j})]$. If $j>i$, we match the $[b_j]$-block in $U$ against the second occurrence of the $[a_i]$-block in $W^2$ and choose the bounded range starting at the first occurrence of $[\psi_A(a_{n+i-j+1})]$. In both cases there is one zero-block to the left (respectively, right) of $[\psi_A(a_i)]$ for one $[\mathbbb 1]$-block and each $[b_k]$-block for $1\leq k<j$ (respectively, $j<k\leq n$). Hence, $U$ is a $|W|$-subsequence of $W^2$.

For the inverse implication, suppose that $a_i$ is not orthogonal to $b_j$ for all $1\leq i,j\leq n$. 
By \cref{lem:orth}, no $[b_j]$-block occurs in any $[a_i]$-block as a subsequence, hence the $[\cdot]$-blocks of $U$ only occur in zero-blocks of $W^2$. 
By \cref{rem:brack}, it suffices to show that $U$ does not occur in a bounded range starting with $[$. Each of those has exactly $n+1$ zero-blocks but $U$ has $n+2$ non-zero-blocks. Thus, $U$ is $|W|$-absent from $W^2$. \hspace{2.5cm} (End of the proof of \cref{lem:orth-subs})\end{proof}

Finally, we note that $|W|=(2n+3)(3d+2)\in \mathcal O(n\cdot poly(d))$ and $|U|=(n+2)(2d+2) \in \mathcal O(n\cdot poly(d))$ and so an algorithm deciding $\pSubSeqMatch$ in time $\mathcal O(|W|^h|U|^g)$, with $h+g=2-\epsilon$, could also be used to decide $\OV$ in time $\mathcal O(n^{2-\epsilon}poly(d))$, which is not possible by $\OVH$. Hence, \cref{thm:lowerBoundSubseq} holds.
\end{proof}

\section{Analysis Problems}
This section covers several decision problems regarding the set $\pSubseq_k$ defined for an input word. More precisely, we approach problems related to the universality of this set, the equivalence of the respective sets for two words, as well as problems related to minimal and shortest absent subsequences w.\,r.\,t. these sets. For these problems we give respective hardness results and fine-grained  conditional lower bounds. We start with the following problem.

\begin{problemdescription}\label{kpNonUniv}
  \problemtitle{$k$-Non-Universality in Bounded Range, $\kpNonUniv$}
  \probleminput{A word $w$ over $\Sigma$ and integers $k, p$, with $\len{w} = n$, and $k\leq p\leq n$.}
  \problemquestion{Decide whether $\pSubseq_{k}(w) \neq \Sigma^k$.}
\end{problemdescription}

Let us observe that if $|w|<k\sigma$, where $\sigma = |\Sigma|$, then we can trivially conclude by the definition of the universality of a word that $\Subseq_k(w)\neq \Sigma^k$, so $\pSubseq_{k}(w) \neq \Sigma^k$ as well. Therefore, to avoid the trivial inputs of $\kpNonUniv$ we will assume that $|w|\geq k\sigma$. 

%\begin{problem}[k,p-non-universality]
%	\label{p:k-p-non-universality}
%	Given a word $w$ over $\Sigma$ and integers $k, p$,
%	decide whether $\pSubseq_{k}(w) \neq \Sigma^k$.
%\end{problem}

%===============================================================================

To show that $\kpNonUniv$ is NP-complete,
we first examine a related problem given in \cite{Manea2013}.
We need some preliminaries.
A partial word over an alphabet $\Sigma$ is a string from $(\Sigma\cup\{\Diamond\})^*$.
In such a partial word $u$,
we have defined positions (those positions $i$ for which $u[i]\in \Sigma$)
and undefined positions (those positions $i$ for which $u[i]=\Diamond$);
intuitively, while the letters on the defined positions are fixed,
the $\Diamond$ can be replaced by any letter of the alphabet $\Sigma$
and, as such, a partial word actually describes a set of (full) words over $\Sigma^*$,
all of the same length as $u$.
This is formalized as follows.
If $u$ and $v$ are partial words of equal length,
then $u$ is contained in $v$, denoted by $u \subseteq v$, if $u[i] = v[i]$,
for all defined positions $i$ (i.\,e., all positions $i$ such that $u[i]\in \Sigma$).
Moreover, the partial words $u$ and $v$ are compatible,
denoted by $u\uparrow v$, if there exists a full word $w$ such that $u \subseteq w$ and $v \subseteq w$. 
In this framework, we can define the problem $\pWords$,
which we will use in our reductions.

\begin{problemdescription}
  \problemtitle{Partial words non-universality, $\pWords$}
  \probleminput{A list of partial words $S = \{w_1, \ldots, w_k\}$ over $\{0,1\}$, each partial word having the same length $L$}
  \problemquestion{Decide whether there exists a word $v \in \{0,1\}^L$ such that $v$ is not compatible with any of the partial words in $S$.}
\end{problemdescription}

The first part of the following result was shown in \cite{Manea2013} via a reduction from $3$-$\SatProb$, and it can be complemented by a conditional lower bound.

\begin{theorem}\label[Theorem]{thm:pWordsCNFSat}
	$\pWords$ is NP-complete and cannot be solved in subexponential time $2^{\smallO(L)} \poly(L,n)$ unless $\ETH$ fails.
\end{theorem}
\begin{proof}
	A full proof of the first statement is given in \cite{Manea2013}
	but we will give here some overview and essential key parts from it to see how the second part of the statement follows.
	The main idea is based on a reduction from the NP-complete problem $3$-$\SatProb$ (where the conjunctive normal form is additionally, but also trivially, restricted in the way that every clause cannot contain both a variable and its negation at the same time).
	
	Considering an instance of the $3$-$\SatProb$ problem,
	one can associate a corresponding instance of $\pWords$
	where the number of partial words for $\pWords$ equals the number of clauses from $3$-$\SatProb$,
	and the number of variables equals the length of the partial words. 
	
	Assume that the instance of the $3$-$\SatProb$ problem consists in the Boolean formula $F$ which is the conjunction of the clauses $\{c_1, \ldots, c_m\}$ over a set of variables $V = \{v_1, v_2, \ldots, v_n\}$. Clearly, $F$ is satisfiable if and only if there exists an assignment of the variables from $V$ which makes all clauses $\{c_1, \ldots, c_m\}$ equal to $1$, and this corresponds to an assignment of the variables from $V$ which makes all clauses $\{\neg c_1, \ldots, \neg c_m\}$ equal to $0$. 
	
	We now define an instance of the $\pWords$ problem with $n$ words $w_1,\ldots, w_n$. In general, we assign the value $1$ to position $j$ in partial word $w_i$ if variable $v_j$ appears in clause $\neg c_i$ in its unnegated form,
	and we assign the value $0$ to position $j$ in partial word $w_i$ if $\neg v_j$ appears in clause $\neg c_i$, i.\,e., the negation of variable $v_j$ appears in $\neg c_i$.  Otherwise, we assign the value $\Diamond$ to position $j$ in partial word $w_i$. 
	
	By this construction, we obtain that the instance  of the $3$-$\SatProb$ problem is satisfiable if and only if there exists a word $v$ of length $L$ which is not compatible to any of the words $w_i$. Therefore, $\pWords$ is NP-hard (and the completeness now follows trivially). Moreover, as $L$, the length of the partial words in the instance of our problem, equals the number of variables in the $3$-$\SatProb$, the conditional lower bound holds as well. 
\end{proof}
	
Based on the hardness of $\pWords$, we continue by showing that the k-Non-Universality in Bounded Range Problem is also NP-hard.

\begin{theorem}\label[Theorem]{thm:univ-np-hard}
	$\kpNonUniv$ is NP-hard and cannot be solved in subexponential time $2^{\smallO(k)} \poly(k,n)$ unless $\ETH$ fails.
\end{theorem}

\begin{proof}
	Given an instance of $\pWords$, that is, given a set $S = \{w_1, \ldots, w_k\}$ of partial words,
	each of length $L$,	
	we construct an instance of $\kpNonUniv$ as follows. \looseness=-1
	For every $w_i \in S$,
	let $u_i = u_i^1 u_i^2 \cdots u_i^L$ with
	\[
		u_i^j = \begin{cases} 0\# & \text{if } w_i[j] = 0,\\
					  1\# & \text{if } w_i[j] = 1,\\
					  01\# & \text{if } w_i[j] = \Diamond ,
			\end{cases}
	\]
	\noindent
	for all $j \in [L]$. Now, we define the words
	\begin{align*}
		V &= \#^{2L} (001101\#^{2L})^{L-1}, \\
		U &= \#^{4L^2} u_1 \#^{4L^2} u_2 \#^{4L^2} \cdots \#^{4L^2} u_k \#^{4L^2}.
	\end{align*}
	Finally, we let $W=VU$. Clearly, $|V|=2L+(6+2L)(L-1)= 2L^2+6L - 6$.
	Note that for the definition of $U$ the number of separating $\#$-symbols only needs to be large enough with respect to a window size of $\len V$ to effectively isolate the factors $u_1, u_2, \ldots, u_k$, so it could be possible that the exponent $4L^2$ can be lowered to a certain extent.
	For the instance of $\kpNonUniv$, we set $k= 2L$ and $p= \len V$.
	
	Note that every word of length $2L$ over the alphabet $\{0,1,\#\}$ is a subsequence of $V$
	except for words $a_1 \# a_2 \# \cdots \# a_L \#$ with $a_i \in \{0,1\}$.
	Thus, due to the relatively large prefix $\#^{4L^2}$ of $U$, the only possibility for a word of the form $a_1 \# a_2 \# \cdots \# a_L \#$ with $a_i \in \{0,1\}$ to be a subsequence of $W$ is
	when it occurs in $U$ as a subsequence. Moreover, for a word $a_1 \# a_2 \# \cdots \# a_L \#$ to occur as a subsequence in a range of length $|V|$ of $U$, it must occur as a subsequence of one of the words $u_i$, with $i\in [1:k]$ (again, due to the large factors $\#^{4L^2}$ of $U$ separating the words $u_i$). This means that $W$ is $k$-universal if and only if all words from $\Sigma^k$ of the form $a_1 \# a_2 \# \cdots \# a_L \#$ with $a_i \in \{0,1\}$ occur in $U$.
	
	Now, let us observe that a word $x'=a_1 a_2 \cdots a_L  \in \{0,1\}^L$ is compatible with one of the words $w_i$ if and only if $x=a_1 \# a_2 \# \cdots a_L \# $ is a subsequence of $u_i$. For the implication from left to right note that if $a_j=w_i[j]\in \{0,1\}$, then $a_j\# =u_i^j\#$. Moreover, if $w_i[j]=\Diamond$, then $a_j\#$ is a subsequence of $u_i^j=01\#$. For the converse, assume $x=a_1 \# a_2 \# \cdots a_L \# $ is a subsequence of $u_i$. It follows that $a_j\#$ must be a subsequence of $u_i^j$. If $u_i^j=\ell \#$ for $\ell\in \{0,1\}$, then we have that $a_j\# = \ell \#$ and $\ell= w_i[j]$. If $u_i^j=01 \#$, then we have that $w_i[j]=\Diamond$, so $a_j$ is compatible with $w_i[j]$. Clearly, $x'$ is compatible with $w_i$. 
	
	Therefore, $\pSubseq_{k}(w) \neq \Sigma^k$ holds
	if and only if there exists a word $x=a_1 \# a_2 \# \cdots \# a_L \#$, for some $a_i \in \{0,1\}$, that is not a subsequence of any of the $u_1$, $\ldots$, $u_k$.
	And if there exists such a word $x$,
	then and only then there exists a word $x'=a_1 a_2 \cdots a_L  \in \{0,1\}^L$ such that $x'$ is not compatible with any of the words from~$S$. So our reduction is valid. The instance of $\pWords$ is accepted if and only if the instance of $\kpNonUniv$ is accepted.
	
	To analyse the complexity of this reduction, note that 
	we have constructed several words in time polynomial in $k$ and $L$. 
	
	This means that we now have a valid reduction from $\pWords$ to $\kpNonUniv$, which works in $poly(L,n)$ time.
	
	Ultimately, we have a reduction from $3$-$\SatProb$ to $\pWords$ (from \cref{thm:pWordsCNFSat}) 
	and a reduction from $\pWords$ to $\kpNonUniv$.
	As both reductions can be performed in polynomial time and in the reduction from $\pWords$ to $\kpNonUniv$ we have $k=2L$,
	we also obtain an $\ETH$ lower bound for $\kpNonUniv$. That is, it is immediate that $\kpNonUniv$ cannot be solved in subexponential time $2^{\smallO(k)} \poly(k,n)$ unless $\ETH$ fails.
\end{proof}

This hardness result is complemented by the following algorithmic result.
\begin{remark}
Note that $\kpNonUniv$ can be trivially solved in $\mathcal{O}(\sigma^k \poly(k,n))$ by a brute-force algorithm that simply checks for all words from $\Sigma^k$ whether they are in $\pSubseq_{k}(w)$. For $\sigma\in \mathcal{O}(1)$, this algorithm runs in $\mathcal{O}(2^k \poly(k,n))$.
\end{remark}
%===============================================================================

Now, when looking at a related analysis problem, we consider two different words $w$ and $v$,
and we want to check whether both words are equivalent w.\,r.\,t. their respective sets of $p$-subsequences of length $k$.

\begin{problemdescription}\label{p:kpNonEquiv}
  \problemtitle{$k$-Non-Equivalence w.\,r.\,t. Bounded Ranges, $\kpNonEquiv$}
  \probleminput{Two words $w$ and $v$ over $\Sigma$ and integers $k, p$, with $\len{w} = n$, $\len v = m$, and $k \leq p \leq m, n$.}
  \problemquestion{Decide whether  $\pSubseq_{k}(w) \neq \pSubseq_{k}(v)$.}
\end{problemdescription}

%\begin{problem}[k,p-non-equivalence]
%	\label{p:k-p-non-equivalence}
%	Given words $w$ and $v$ over $\Sigma$ and integers $k, p$,
%	decide whether $\pSubseq_{k}(w) \neq \pSubseq_{k}(v)$.
%\end{problem}

%We note that this problem is equivalent to $\kpNonUniv$ when one of the words has the form $(12 \cdots s)^k$ with $s = \len \Sigma$.
We can now state the following theorem.

\begin{theorem}
	$\kpNonEquiv$ is NP-hard and cannot be solved in time $2^{\smallO(k)} \poly(k,n,m)$ unless $\ETH$ fails.
\end{theorem}
\begin{proof}
We use the same reduction from $\pWords$ as in the proof of \cref{thm:univ-np-hard}. The difference is that after constructing the word $W$ as in the respective reduction, we also construct the word $W'= V \#^{4L^2} (01\#)^L \#^{4L^2}$. By the same arguments as in the proof of the respective theorem, we have that $\pSubseq_k(W')=\{0,1,\#\}^k$ (where $k=2L$ and $p=|V|$). Therefore, the instance of $\pWords$ is accepted if and only if $\pSubseq_k(W)=\{0,1,\#\}^k$ if and only if $\pSubseq_k(W)=\pSubseq_k(W')$. The conclusion follows now immediately.
\end{proof}
%	We give a reduction from $\kpNonUniv$.
%	Let $w$ be a word over $\Sigma$, with two integers $k$, $p$ be a non-trivial instance of $\kpNonUniv$
%	with $\len w = n$ and $k \leq p \leq n$ (and $k\sigma \leq n$).
%	
%	For the $\kpNonEquiv$ instance we then consider the input consisting of the two words $w$ and $v = (1 2 \cdots |\Sigma|)^k$ with $\len v = m$ and $k$ and $p$ as in the instance of $\kpNonUniv$. At this point, note that $|v|\leq |w|$. 
%	
%	Note that $v$ is $k$-universal,
%	meaning that every word of length $k$ occurs as a subsequence in $v$.
%	If $w$ is not $k$-universal (i.\,e., $\pSubseq_{k}(w) \neq \Sigma^k$),
%	then $\pSubseq_{k}(w) \neq \pSubseq_{k}(v)$ holds.
%	Contrariwise, if $\pSubseq_{k}(w) \neq \pSubseq_{k}(v)$, s
%	then $w$ must be $k$-universal.
%	
%	Clearly, this construction can be done in polynomial time,
%	which directly gives us again an $\ETH$ lower bound based on the previous reductions from $\SatProb$ to $\pWords$ and from $\pWords$ to $\kpNonUniv$.

Again, a matching upper bound is immediate.
\begin{remark}
$\kpNonEquiv$ can be trivially solved in $\mathcal{O}(\sigma^k \poly(k,n))$ by a brute-force algorithm that looks for a word from $\Sigma^k$ which separates $\pSubseq_{k}(w) $ and $\pSubseq_{k}(v) $. For $\sigma\in \mathcal{O}(1)$, this algorithm runs in $\mathcal{O}(2^k \poly(k,n))$.
\end{remark}

A natural problem arising in the study of the sets $\pSubseq_{k}(w)$, for $k\leq p\leq |w|$, is understanding better which are the strings missing from this set. To that end, we have introduced in \cref{sec:defs} the notions of shortest and minimal absent $p$-subsequences, $\psas$ and $\pmas$, respectively.

We first focus on shortest absent subsequences in bounded ranges.
\begin{problemdescription}\label{p:nonPSAS}
  \problemtitle{Non-Shortest Absent Subsequence w.\,r.\,t. Bounded Ranges, $\pNonPSAS$}
  \probleminput{Two words $w$ and $v$ over $\Sigma$ and integer $p$, with $\len{w} = n$, $\len{v}=m$, and $m\leq p\leq  n$.}
  \problemquestion{Decide whether $v$ is not a $\psas$ of $w$..}
\end{problemdescription}

%\begin{problem}[non-$\psas$]\label{p:non-psas}
%	Given words $v$, $w$ and $p$,
%	decide whether $v$ is not a $\psas$ of $w$.
%\end{problem}
We can show the following result.

\begin{theorem}
	$\pNonPSAS$ is NP-hard and cannot be solved in subexponential time $2^{\smallO(k)} \poly(k,n,m)$ unless $\ETH$ fails.
\end{theorem}
\begin{proof}
We begin our proof with an observation. Assuming that word $v$ (the subsequence in question) with $\len{v} = m$ is already confirmed to be an absent $p$-subsequence, then $v$ is not a $\psas$ if and only if $\xSubseq{p}_{m-1} (w) \neq \Sigma^{m-1}$ holds.
	
	For the hardness reduction, we use again the $\pWords$ problem
	and give a reduction similar to the one from the proof of \cref{thm:univ-np-hard}.
	
	Let $S = \{w_1, \ldots, w_k\}$ be an instance of the partial words problem,
	with each of the words $w_1, \ldots, w_k$ from set $S$ having length $L$.
	Recall the construction of an instance of $\kpNonUniv$.
	
	For every $w_i \in S$,
	construct $u_i = u_i^1 u_i^2 \cdots u_i^L$ where, for all $j \in [L]$,
	\[
	u_i^j = \begin{cases} 0\# & \text{if } w_i[j] = 0\\
		1\# & \text{if } w_i[j] = 1\\
		01\# & \text{if } w_i[j] = \Diamond
	\end{cases}
	\]
	
	\begin{align*}
		V &= \#^{2L} (001101\#^{2L})^{L-1}, \\
		U &= \#^{4L^2} u_1 \#^{4L^2} u_2 \#^{4L^2} \cdots \#^{4L^2} u_k \#^{4L^2}, \\
		W &= VU.
	\end{align*}
	
	For the problem instance of $\psas$,
	let $w = W$, $p=\len V$ and $v = (0\#)^{L}0$,
	and note that $v$ has length $m = 2L + 1$.
	Now, it is immediate that (because of the large $\#$-blocks occurring in $W$) $v$ is an absent $p$-subsequence of $W$.
	Therefore, it holds that $v$ is a $\psas$ of $w$ if and only if $w$ is $(m-1)$-universal,
	or in other words if $W$ is $2L$-universal. We have seen that this is equivalent to checking whether the input instance of $\pWords$ is accepted or not. 
	
	At this point, we have a valid reduction from $\pWords$ to $\pNonPSAS$, which can be implemented in time $\poly(L,n,m)$.
	Therefore, our claim holds, and moreover we again obtain an $\ETH$ conditional lower bound.
	Consequently, $\pNonPSAS$ cannot be solved in subexponential time $2^{\smallO(k)} \poly(l,n,m)$ unless $\ETH$ fails.
	Thus, our claim holds.
	\end{proof}

Now let us examine minimal absent subsequences in bounded ranges, $\pmas$. First we give an algorithm to check whether a string is a $\pmas$ of another string.

\begin{problemdescription}
  \problemtitle{Minimal Absent Subsequences w.\,r.\,t. Bounded Ranges, $\pMAS$}
  \probleminput{Two words $w$ and $v$ over $\Sigma$ and integer $p$, with $\len{w} = n$, $\len{v}=m$, and $m\leq p\leq  n$.}
  \problemquestion{Decide whether $v$ is a $\pmas$ of $w$.}
\end{problemdescription}

%\begin{problem}[$\pmas$]\label{p:pmas}
%	Given words $u$, $w$ over $\Sigma$ and $p\in\mathbb N$,
%	decide whether $u$ is a $\pmas$ of $w$.
%\end{problem}

In this case, we obtain a polynomial time algorithm solving this problem.
\begin{theorem}\label[Theorem]{thm:algpmas}
$\pMAS$ can be solved in time $\mathcal{O}(nm)$, where $\len v = m, \len w = n$.
\end{theorem}
\begin{proof}
As in the case of the algorithm solving $\pSubSeqMatch$ in $\mathcal{O}(mn)$ time, presented in the proof of \cref{thm:algoMatching}, we will propose here an algorithm which works in a streaming fashion w.\,r.\,t. the word $w$. In this case, our algorithm scans the letters of $w$ one by one, left to right, while maintaining the following information. On the one hand, our algorithm checks if $v$ occurred as a subsequence in any of the factors $w[t-p+1:t] $ scanned so far (so, the strings which occur as suffixes of length $p$ of the stream during the scan). On the other hand, it maintains data structures allowing us to keep track of the factors $v[1:i-1]v[i+1:m]$ of $v$ (with $i\in [1:m]$) which occurred as subsequences in the factors $w[t-p+1:t] $ scanned so far. After all the letters of $w$ are scanned, the algorithm decides that $v$ is a $\pmas$ of $w$ if and only if $v$ did not occur in any of the windows $w[t-p+1:t] $ for $t\in [p:n]$ and each of the factors $v[1:i-1]v[i+1:m]$ of $v$ (with $i\in [1:m]$) occurred as subsequence in at least one of the factors $w[t-p+1:t] $, for $t\in [p:n]$. 

In our algorithm, we maintain $\sigma$ lists of letters: for each $a\in \Sigma$, we have a list $L_a$. The following invariant will be maintained during the algorithm, for all $a\in \Sigma$: after letter $w[t]$ is processed by our algorithm, $L_a$ contains the positions $i_1,i_2,\ldots,i_\ell$ of $w$ where $a$ occurs inside the bounded range $w[t-p+1:t]$, ordered increasingly. 

Moreover, we maintain $2m$ pointers to elements of these lists. On the one hand, we have the pointers $p_i$, with $i\in [1:m]$, where $p_i$ points to the position $j_{i}$ (of $w$) stored in the list $L_{v[i]}$. On the other hand, we have the pointers $\ell_i$, with $i\in [1:m]$, where $\ell_i$ points to the position $g_{i}$ (of $w$) stored in the list $L_{v[i]}$. The invariant property (holding after $w[t]$ was processed) of these pointers, and the positions they point, is that, for all $i\in [1:m]$, $w[t-p+1:j_i]$ (respectively, $w[g_i:t]$) is the shortest prefix (respectively, suffix) of $w[t-p+1:t]$ which contains $v[1:i]$ (respectively, $v[i:m]$) as subsequence; $p_i$ (respectively, $\ell_i$) is undefined if such a prefix (respectively, suffix) does not exist. 

Intuitively, the positions $j_i$, for $i\in [1:m]$, can be seen as the positions where the letters $v[i]$ of $u$ are identified in $w[t-p+1:t]$, respectively, when the classical subsequence matching greedy algorithm (see, e.\,g., \cite{Kosche2021}) is used to identify the letters of $u$ greedily left to right in $w[t-p+1:t]$. That is, $j_i$ is the position where we find $v[i]$ first (in the order from left to right), after position $j_{i-1}$. Similarly, the positions $g_i$, for $i\in [1:m]$, can be seen as the positions where the letters $v[i]$ of $u$ are identified in $w[t-p+1:t]$, respectively, when the classical subsequence matching greedy algorithm (see, e.\,g., \cite{Kosche2021}) is used to identify the letters of $u$ greedily right to left in $w[t-p+1:t]$. That is, $g_i$ is the position where we find $v[i]$ first when going right to left from position $g_{i+1}$. 

Alongside the lists and set of pointers, we also maintain a set $S\subseteq [1:m]$, with the invariant property that, after $w[t]$ was processed, we have that $i\in S$ if and only if $v[1:i-1]v[i+1:m]$ is a subsequence of some factor $w[j-p+1:j]$ with $j\leq t$. 

Let us now describe the core of our algorithm: how are the lists $L_a$, for $a\in \Sigma$, and the pointers $p_i$ and $\ell_i$, for $i\in [1:m]$ maintained.

We initially define all the lists to be empty and leave all pointers undefined. We define the set $S$, which is initially empty. 

When reading $w[t]$, for $t$ from $1$ to $n$ we do the following steps. 
\begin{enumerate}
\item We insert position $t$ at the end of the list $L_{w[t]}$. 
\item If $j_1=t-p$ (i.\,e., the pointer $p_1$ points to position $t-p$), then move $p_1$ one position to the right in list $L_{v[1]}$, and update $j_1$ accordingly; if $t-p$ was the single element of the list $L_{v[1]}$, then set $p_1$ to be undefined and $j_1\gets +\infty$. Further, if $p_1$ was updated, for all $i>1$ in increasing order, move $p_i$ to the right in list $L_{v[i]}$ and stop as soon as they point to a position $j_i>j_{i-1}$; if such a position does not exist in $L_{v[i]}$, then $p_i$ is set to be undefined and $j_i\gets +\infty$. If, at some point, a pointer $p_i$ is set to be undefined, then all the pointers $p_d$, with $d>i$, are set to be undefined too, and $j_d\gets +\infty$.
 
\item If $w[t]=v[m]$, then set $\ell_m$ such that it points to position $t$ in $L_{v[m]}$ and set $g_m\gets m$. Further, if $\ell_m$ was updated, then for all $i$ from $m-1$ to $1$ in decreasing order, move $\ell_i$ to the right in list $L_{v[i]}$ (starting from its current position, or from the first position in the list, if $\ell_i$ was undefined) and stop as soon as we reach the last element of that list or an element whose right neighbour in the respective list is a position $g'>g_{i+1}$; if such an element does not exist in $L_{v[i]}$, then $\ell_i$ is set to be undefined and $j_1\gets +\infty$. If $\ell_i$ points at position $g_i=t-p$, then $\ell_i$ is set to be undefined and $j_1\gets +\infty$. If, at some point, a pointer $\ell_i$ is set to be undefined, then all the pointers $\ell_d$, with $d<i$, are set to be undefined too, and $j_d\gets +\infty$.

\item We remove position $t-p$ from the beginning of the list  $L_{w[t-p]}$.

\item Once these update operations are completed, we check whether there exists some $i\in[1:m-1]$ such that $j_i<g_{i+1}$. If yes, we conclude that $v$ occurs as a subsequence in $w[t-p+1:t]$, so it cannot be a $\pmas$ of $w$ (and we stop the algorithm). 

\item Further, we check for all $i\in[1:m]$ whether $j_{i-1}<g_{i+1}$ (where $j_0=0$ and $g_{m+1}=m+1$); if $j_{i-1}<g_{i+1}$ holds, then $v[1:i-1]v[i+1:m]$ occurs as a subsequence of $w[t-p+1:t]$, so we insert $i$ in $S$. 
\end{enumerate}

If the process above was successfully executed for all $t\in [1:n]$, we check if $S=[1:m]$, and if yes, we decide that $v$ is a $\pmas$ of $w$. Otherwise, we decide that $v$ is not a $\pmas$ of $w$.

Let us now show the correctness of our approach.  

Firstly, we show that the invariants are preserved at the end of the processing of letter $w[t]$ (provided that they were true before $w[t]$ was processed). 

Clearly, these invariants are true initially. 

Now, assume that the respective invariant properties were true before $w[t]$ was processed. Then it is immediate that, after running the processing for $w[t]$, for all $a\in \Sigma$, we have that $L_a$ contains the positions $i_1,i_2,\ldots,i_\ell$ of $w$ where $a$ occurs inside the bounded range $w[t-p+1:t]$, ordered increasingly. 

The fact that the invariant properties regarding the pointers $p_i$ and $\ell_i$ and the positions pointed by them are still true is not that immediate. Let us first discuss the pointers $p_i$, for $i\in [1:m]$, and the positions $j_i$ they respectively point at. Firstly, after position $t$ is inserted in $L_{w[t]}$, we have that for all $i\in [1:m]$, $w[t-p:j_i]$ is the shortest prefix of $w[t-p:t]$ which contains $v[1:i]$ as subsequence, and $p_i$ is undefined if such a prefix does not exist. Unless $w[t-p]=v[1]$, we clearly also have that for all $i\in [1:m]$, $w[t-p+1:j_i]$ is the shortest prefix of $w[t-p+1:t]$ which contains $v[1:i]$ as subsequence, and $p_i$ is undefined if such a prefix does not exist. So, let us assume that $w[t-p]=v[1]$. In that case, we need to update the pointers. It is easy to see that, in order to preserve the invariants, a correct way to compute the positions $j_i$, for $i\in [1:m]$, is to look first for the leftmost position $j_1$ where $v[1]$ occurs inside $w[t-p+1:t]$, and then look, for $i$ from $2$ to $m$, for the leftmost position $j_i$ where $v[i]$ occurs inside $w[t-p+1:t]$ after $j_{i-1}$. And, this is exactly what we do in the second step of the update process described above. Thus, the invariant property of the pointers $p_i$ is preserved.

A similar approach holds for the case of the pointers $\ell_i$ and the positions $g_i$ they respectively point at. In this case, after position $t$ is inserted in $L_{w[t]}$, if $w[t]\neq v[m]$, we have that, for all $i\in [1:m]$, $w[g_i:t]$ is the shortest suffix of $w[t-p:t]$ which contains $v[i:m]$ as subsequence, and $j_i$ is undefined if such a suffix does not exist. So, after removing position $t-p$ from $L_{w[t-p]}$ and setting $\ell_i$ to undefined and $g_i=+\infty$ if $g_i=t-p$ for some $i$, we clearly also have that, for all $i\in [1:m]$, $w[g_i:t]$ is the shortest suffix of $w[t-p:t]$ which contains $v[i:m]$ as subsequence, and $j_i$ is undefined if such a suffix does not exist. Assume now that $w[t]= v[m]$.  Clearly, in order to preserve the invariants, a correct way to compute the positions $g_i$, for $i$ from $m$ to $1$, is to set $g_m\gets t$, and then look, for $i$ from $m-1$ to $1$, for the rightmost position $g_i$ where $v[i]$ occurs inside $w[t-p+1:t]$ before $g_{i-1}$. Once again, this is exactly what we do in the second step of the update process described above. Thus, the invariant property of the pointers $\ell_i$, and the corresponding positions $g_i$, is preserved.

We additionally note that $v$ occurs as a subsequence of $w[t-p+1:t$ if and only if there exists $i$ such that the shortest prefix of $w[t-p+1:t]$ which contains $v[1:i]$ as a subsequence does not overlap with the shortest suffix of $w[t-p+1:t]$ which contains $v[i+1:m]$ as a subsequence, i.\,e., $j_{i}<g_{i+1}$. Therefore the decision reached in Step $5$ of the above update process is correct.

Finally, it is clear that $v[1:i-1]v[i+1:m]$ occurs in $w[t-p+1:t]$ if and only if the shortest prefix of $w[t-p+1:t]$ which contains $v[1:i-1]$ as a subsequence does not overlap with the shortest suffix of $w[t-p+1:t]$ which contains $v[i+1:m]$ as a subsequence, i.\,e., $j_{i-1}<g_{i+1}$. Therefore, the invariant property of $S$ is also preserved by the computation in Step $6$.

In conclusion, the invariant properties of the maintained data structures are preserved by our update process. So, either the algorithm correctly reports that $v$ is not a $\pmas$ of $w$, while updating the structures for some $t$, because it detected that $v\in \pSubseq(w)$ or $v\notin \pSubseq(w)$ and the algorithm reports (after doing all the updates, for all positions $t$ of $w$) that $v$ is a $\pmas$ of $w$ if and only if $S=[1:m]$ (i.\,e., all subsequences of length $m-1$ of $v$ are in $\pSubseq(w)$). 

Therefore our algorithm works correctly. Now, let us analyse its complexity. 

We do $n$ update steps and, in each of them, we update the lists $L_a$, for $a\in \Sigma$, the set $S$ (by simply adding some new elements), and update the pointers $p_i$ and $\ell_i$, and the corresponding positions $j_i$ and $g_i$, respectively, for $i\in [1:m]$. The set $S$ can be implemented using a characteristic array consisting in $m$ bits, so that all insertions in $S$ can be done in $\mathcal{O}(1)$ time, and checking whether $S=[1:m]$ can be done in $\mathcal{O}(m)$ time. Now, we focus on the time needed for the updates of the lists and pointers. As each position of the word is scanned at most once by each pointer (they move left to right only), and each position of the word is inserted once and removed at most once in exactly one of the lists, the overall time used to process these lists is $\mathcal{O}(mn + n)=\mathcal{O}(mn)$. In conclusion, the time complexity of the entire algorithm is $\mathcal{O}(mn)$. This concludes our proof.
\end{proof}

%Similarly to the proof of \cref{thm:algoMatching}, our algorithm (given in the Appendix) can be seen as an algorithm in the sliding window model, with window of fixed size $p$. If, as in the case of the discussion following \cref{thm:algoMatching}, we assume $v$ (and $m$) to be constant, we obtain a linear time algorithm but the space complexity of this algorithm is $\mathcal{O}(p)$ (as we need to keep track, in this case, of entire content of the window). It would be interesting to see if a more space-efficient sliding-window algorithm could be obtained for this problem, while maintaining the same time complexity.
%\todo{Explain here that for $m$ constant, we can also reach log in the sliding window model. Open problem if we can optimise both, as it was the case of Thm. 1}
%

Similarly to the proof of \cref{thm:algoMatching}, we propose an algorithm which can be seen as working in the sliding window model, with window of fixed size $p$. If, as in the case of the discussion following \cref{thm:algoMatching}, we assume $u$ (and $m$) to be constant, we obtain a linear time algorithm. However, its space complexity, measured in memory words, is $\mathcal{O}(p)$ (as we need to keep track, in this case, of entire content of the window). In fact, when $m$ is constant, it is easy to obtain a linear time algorithm using $\mathcal{O}(1)$ memory words (more precisely, $\mathcal{O}(\log p)$ bits of space) for this problem: simply try to match $u$ and all its subsequences of length $(m-1)$ in $w$ simultaneously, using the algorithm from \cref{thm:algoMatching}. Clearly, $u$ is a $\pmas$ if and only if $u$ is not a subsequence of $w$, but its subsequence of length $m-1$ are. However, the constant hidden by the ${\mathcal O}$-notation in the complexity of this algorithm is proportional with $m^2$. It remains open whether there exists a (sliding window) algorithm for  $\pMAS$ both running in ${\mathcal O}(mn)$ time (which we will show to be optimal, conditional to $\OVH$) and using only ${\mathcal O}(\log p)$ bits (which is also optimal for sliding window algorithms).

\begin{theorem}\label[Theorem]{thm:logSpaceLowerBoundpMas}
	The $\pMAS$ problem cannot be solved by a sliding window algorithm using ${o}(\log p)$ bits. 
\end{theorem}
\begin{proof}
	This can be shown by a reduction from $\pSubSeqMatch$. The results of \cite{GanardiHL16} show that $\pSubSeqMatch$ cannot be solved using ${o}(\log p)$ bits. 
	
	So, assume for the sake of contradiction that $\pMAS$ has a sliding window solution which uses  ${o}(\log p)$ bits. 
	
	Let $p\in\mathbb N$ and $u,w\in\Sigma^*$ be an instance of $\pSubSeqMatch$. For $1\leq i\leq m=|u|$, we define $u_i=u[1:i-1]u[i+1:m]$ to be the subsequence we obtain from $u$ by deleting its \nth i letter. We let \(w'=\$^{p+1}u_1\$^{p+1}u_2\$^{p+1}\ldots\$^{p+1} u_m\), where $\$$ is an additional letter not occurring in $\Sigma$, and $v=ww'$.
	
	Then, every $u_i$ is a $p$-subsequence of $v$, hence every proper subsequence of $u$ is a $p$-subsequence of $v$. So, $u$ is a $\pmas$ of $v$ if and only if it is $p$-absent from $v$. Since $u$ is $p$-absent from $w'$ (due to the $\$^{p+1}$ separators and the fact that $u$ contains no $\$$ symbol) and no proper suffix of $u$ occurs as a prefix of $w'$ (again, due to the large-enough $\$^{p+1}$ separator), $u$ is $p$-absent from $v$ if and only if $u$ is $p$-absent from $w$. In conclusion, $u$ is a $\pmas$ of $v$ if and only if $u$ is $p$-absent from $w$.
	
	Therefore, checking whether $u$ is $p$-absent from $w$ (or, alternatively, $u$ is a $p$-subsequence of $w$) it is equivalent to checking whether $u$ is a $\pmas$ of $w$. Our assumption that $\pMAS$ has a sliding window solution which uses  ${o}(\log p)$ bits would therefore imply that $\pSubSeqMatch$ can be solved by a sliding window algorithm using ${o}(\log p)$ bits, a contradiction.
\end{proof}

Complementing the discussion above, we now show that it is possible to construct in linear time, for words $u,w$ and integer $p\in\mathbb N$, a string $w'$ such that deciding whether $u$ is a $\pmas$ of $w'$ is equivalent to deciding whether $u$ is a $p$-subsequence of $w$, so solving $\pSubSeqMatch$ for the input words $u$ and $w$. Hence, the lower bound from Theorem \ref{thm:lowerBoundSubseq} carries over,  and the algorithm in \cref{thm:algpmas} is optimal (conditional to $\OVH$) from the time complexity point of view. \looseness=-1

%\begin{theorem}
%The algorithm given in \cref{thm:algpmas} has optimal time complexity. \todo{What to do with the $+ns$ part.\\ FM: Actually, we can assume that $s\leq m$.}
%\end{theorem}
\begin{theorem}
$\pMAS$ cannot be solved in time $\mathcal{O}(n^h m^g)$ where $h+g= 2-\epsilon$ with $\epsilon>0$, unless $\OVH$ fails. 
\end{theorem}
\begin{proof}
We reduce from $\pSubSeqMatch$. From the proof of \cref{thm:lowerBoundSubseq}, we have that
$\pSubSeqMatch$ cannot be decided in time $\mathcal{O}(n^h m^g)$, where $h+g= 2-\epsilon$ with $\epsilon>0$, unless $\OVH$ fails,  even if the input words are over an alphabet $\Sigma$ with $|\Sigma|=5$. Let $p_0\in\mathbb N$ and $u,w\in\Sigma^*$, with $n=|w|$ and $m=|u|$, be an input instance for $\pSubSeqMatch$ with $|\Sigma|=\sigma \in \mathcal{O}(1)$. Assume $\$$ is a symbol not occurring in $\Sigma$.

For $1\leq i\leq m-1$, we define $u_i$ to be a word of length $|\Sigma|$, where each letter of $\Sigma$ occurs exactly once, and $u_i$ ends with $u[i]$ (i.e., $u_i$ is a permutation of $\Sigma$, with $u[i]$ on its last position). Now, we define $w'=
\phi(w)\$^{3p_0\sigma} u_1u_2\ldots u_{m-1}$, where $\phi$ is a morphism which simply replaces each letter $a\in \Sigma$ by $a\$^{\sigma-1}$. Note that $|w'|=\sigma|w|+3p_0\sigma + \sigma (m-1) \in \mathcal{O}(|w|).$

We define the instance of $\pMAS$ where we want to check whether $u$ is a $\pmas$ of $w'$, for $p=p_0\sigma$. We show that this instance of $\pMAS$ is accepted if and only if the instance of $\pSubSeqMatch$ defined by $u,w$, and $p_0$ is rejected (i.e., $u$ is not a $p_0$-subsequence of $w$). 

Assume that $u$ is a $\pmas$ of $w'$, for $p=p_0\sigma$. Assume also, for the sake of contradiction, that $u$ is a $p_0$-subsequence of $w$. Then, $u$ would also be a $p$-subsequence of $\phi(w)$, for $p=p_0\sigma$, a contradiction to the fact that $u$ is a $\pmas$ of the word $w'$. 

Now, assume that $u$ is not a $p_0$-subsequence of $w$. We want to show that $u$ is a $\pmas$ of $w'$. We can make several observations. Firstly, each subsequence of length $m-1$ of $u$ occurs in the word $u_1u_2\ldots u_{m-1}$, which contains as subsequences all words from $\Sigma^{m-1}$ (i.e., it is $(m-1)$-universal w.\,r.\,t. the alphabet $\Sigma$), and whose length is $(m-1)\sigma < p_0\sigma=p$. Thus, each subsequence of length $m-1$ of $u$ is a $p$-subsequence of $w'$. Secondly, $u$ is not a $p$-subsequence of $w'$. Indeed, if $u$ would be a $p$-subsequence of $w'$, then (due to the large factor $\$^{3p_0\sigma}$, as well as due to the fact that $u$ does not contain $\$$) we have that $u$ must be either a $p$-subsequence of $\phi(w)$ or a $p$-subsequence of $u_1\ldots u_m$. On the one hand, $u$ is not a subsequence of $u_1u_2\ldots u_{m-1}$ by the construction of $u_1u_2\ldots u_{m-1}$. Indeed, in order to match $u$ inside $u_1u_2\ldots u_{m-1}$ , the only possibility is to match the $i^{th}$ letter of $u$ to the last letter of $u_i$. As $u[m-1]$ is matched to the last letter of $u_{m-1}$, there is no position of $u_1u_2\ldots u_{m-1}$ to which $u[m]$ could be matched. So $u$ cannot a $p$-subsequence of $u_1u_2\ldots u_{m-1}$. On the other hand, if $u$ would be a $p$-subsequence of $\phi(w)$, then, clearly, $u$ would be a $p_0$-subsequence of $w$. Indeed, none of the letters of $u$ are $\$$-symbols, so they should be matched to the letters from $\Sigma$, which occur in a range of size $p_0\sigma$ in $\phi(w)$. As there are at most $p_0$ such letters, our claim holds. But $u$ is not (by hypothesis) a $p_0$-subsequence of $w$. So, putting all together $u$ is indeed a $\pmas$ of $w'$.

This shows the correctness of our reduction from $\pSubSeqMatch$ to $\pMAS$. Moreover, the time needed to construct the instance of $\pMAS$ is linear in the size of the instance of $\pSubSeqMatch$ (because $\sigma\in \mathcal{O}(1)$), so the conclusion follows.
\end{proof}

% ===============================================
\section{Application: Subsequence Matching in Circular Words}
% ===============================================

An interesting application of the framework developed in the previous setting is related to the notion of circular words.

We start with a series of preliminary definitions and results.

Intuitively, a circular word $w_\circ$ is a word obtained from a (linear) word $w\in\Sigma^*$ by linking its first symbol after its last one as shown in \cref{fig:circ_word}. 
\begin{figure}[h]
\vspace{-0.5cm}
	\centering
	\includegraphics[scale=0.5]{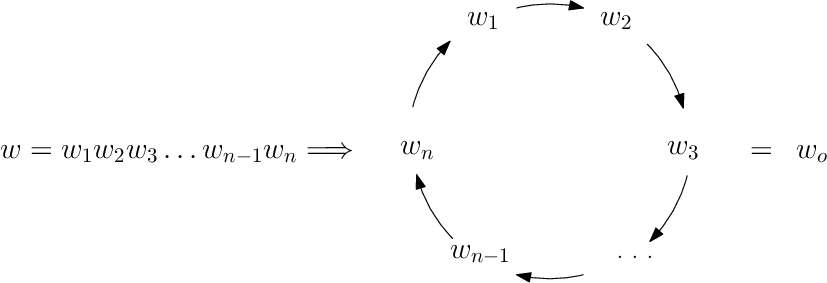}
	\caption{The circular word $w_\circ$, defined via the (linear) word $w$.}
	\label{fig:circ_word}
\vspace{-0.5cm}
\end{figure}

Formally, two words $u,w\in \Sigma^*$ are conjugates (denoted $u\sim w$) if there exist $x,y\in \Sigma^*$ such that $u=xy$ and $w=yx$. The conjugacy relation $\sim$ is an equivalence relation on $\Sigma^*$, and the circular word $w_\circ$ is defined as the equivalency class of $w$ with respect to $\sim$ (i.\,e., the set of all words equivalent to $w$ w.\,r.\,t. $\sim$). Clearly, for a word $w$ of length $n$, the equivalence class of $w$ with respect to $\sim$ has at most $n$ elements. It is worth investigating how we can represent circular words. To this end, we use the following definition from \cite{NagyProperties}.

\begin{definition}\label{def:circ_word_rep}
A pair $(u,n)\in\Sigma^* \times \mathbb{N}$ is a representation of the circular word $w_\circ$ if $|u| \leq n, n = |w|$ and $u^{\frac{n}{|u|}}\in w_\circ$. A minimal representation of a circular word $w_\circ$ is a pair $(u,n)$ such that $(u,n)$ is a representation of $w_\circ$ and for each other pair $(u',n)$ which represents $w_\circ$ we have that $|u|< |u'|$ or $|u|=|u'|$ and $u$ is lexicographically smaller than $u'$. 
\end{definition}

As an example, $(baa, 5)$ is a representation of the circular word $baaba_\circ$, because $(baa)^{5/3}=baaba$, but its minimal representation is $(ab,5)$. Indeed, $ab^{5/2}=ababa$, which is a conjugate of $baaba$, as $baaba=ba\cdot aba$ and $ababa=aba\cdot ba$.

Clearly, every circular word $w_\circ$ has a minimal representation, %To obtain this result we recall some definitions from \cite{NagyProperties}. An integer $p$ is a weak (respectively, strong) period of a circular word $w_\circ$ if there exists a conjugate $u$ of $w$ such that $u$ has period $p$ (respectively, all conjugates of $w$ have period $p$). So, to obtain the minimal representation of $w_\circ$, we proceed as follows. Let $p$ be the smallest positive integer which is a weak period of a conjugate of $w$. Then, for this $p$, there exist $k$ words $u_1,\ldots,u_k$ which are all conjugates of $w$, all having $p$ as period. The minimal representation of $w$ is given by $(v,|w|)$ where $v$ is the lexicographically smallest of $u_i[1:p]$, for $i\in [1:k]$. 
and an open problem from \cite{NagyProperties} is how efficiently can the minimal representation of a circular word $w_\circ$ be computed. We solve this open problem.\looseness=-1
\begin{theorem}\label[Theorem]{theorem:rep_linear}
Given a word $w$ of length $n$, the minimal representative $(v,n)$ of $w_\circ$ and a conjugate $w[i:n]w[1:i-1]=v^{n/|v|}$ of $w$ can be computed in $\mathcal{O}(n)$ time.\looseness=-1
\end{theorem}
\begin{proof}
We begin by referring to \cite{DumitranGM17}. In Lemma 5 of the respective paper it is shown that, for a word $u$ of length $n$, we can compute in $\mathcal{O}(n)$ time the values $SC[i] = \max\{|r| \mid r$ is both a suffix of $w[1:i-1]$ and a prefix of $w[i:n]\}$. The proof of that lemma can be directly adapted to prove the following result: given a word $u$ of length $n$ and an integer $\Delta \leq n$, we can compute in $\mathcal{O}(n)$ time the values $SC[i] = \max\{|r| \mid r$ is both a suffix of $w[1:i-1]$ and a prefix of $w[i:n], |r|\leq \Delta\}$. We will use this in the following.

Let us now prove the statement of our theorem. We consider the word $\alpha=www$, and we compute the array $SC[i] = \max\{|r| \mid r$ is both a suffix of $\alpha[1:i-1]$ and a prefix of $\alpha[i:3n], |r|\leq n-1\}$, using the result mentioned above. Now, we note that, for $i\in [n+1:2n]$ if $SC[i]=k$, then $\alpha[i:i+k-1]$ is the longest non-trivial border of the conjugate $w[i-n:n]w[1:i-n-1]=\alpha[i:i+n-1]$ of $w$; that means that  $\alpha[i:i+k-1]$ is the longest string which is both non-trivial suffix and prefix of $w[i-n:n]w[1:i-n-1]$. Consequently, the length of the shortest period of $w[i-n:n]w[1:i-n-1]$ is $n-k$. 

In conclusion, we have computed for each conjugate $w[j:n]w[1:j-1]$ of $w$ its shortest period $n-SC[j+n]$. Further, we can sort these conjugates w.\,r.\,t. their shortest period using counting sort. In this way, we obtain a conjugate $w[j:n]w[1:j-1]$ of $w$ which has the shortest period among all the conjugates of $w$. In the case of multiple such conjugates $w[j_g:n]w[1:j_g-1]$, with $g\in [1:\ell]$ for some $\ell$, we proceed as follows. We construct (in linear time) the suffix array of $\alpha$ (as in \cite{KarkkainenSB06}), and set $j= j_h$ where $j_h+n$ occurs as the first in the suffix array of $\alpha$ among all positions $j_g+n$, for $g\in [1:\ell]$. Therefore, we obtain a conjugate $w'=w[j:n]w[1:j-1]$ of $w$ which has the shortest period among all the conjugates of $w$, and is lexicographically smaller than all other conjugates of $w$ which have the same period. Moreover, for $p=n-SC[j+n]$, we have that $w'=(w'[1:p])^{n/p}$. \looseness=-1

The algorithm computing $w'$ and its period runs, clearly, in $\mathcal{O}(n)$ time, as all its steps can be implemented in linear time. Our statement is, thus, correct.
\end{proof}

This concludes the preliminaries part of this section.

In this framework, we define subsequences of circular words. 
\begin{definition}\label{def:circularSubseq}
A word $v$ is a subsequence of a circular word $w_\circ$ ($v \leq_\circ w$) if and only if there exists a conjugate $w' = w[i + 1:n] w[1:i]$ of $w$ such that $v$ is a subsequence of $w'$.
\end{definition}
This definition follows from \cite{Barker2020} where one defines $k$-circular universal words as words $u\in \Sigma^*$ which have at least one conjugate $u'$ whose set of subsequences of length $k$ is $\Sigma^k$. In this setting, we can define the following problem.
\begin{problemdescription}\label{p:ssqmatchCirc}
  \problemtitle{Circular Subsequence Matching, $\SubSeqMatch_\circ$}
  \probleminput{Two words $u$ and $w$ over $\Sigma$, with $\len{u} = m$, $\len{w} = n$, and $m\leq n$.}
  \problemquestion{Decide whether $v \leq_\circ w$.}
\end{problemdescription}

As the conjugates of a word $w$, of length $n$, are the factors of length $n$ of $ww$,
we immediately obtain the following result from \cref{thm:algoMatching}.
\begin{theorem}
 $\SubSeqMatch_\circ$ can be solved in $\mathcal{O}(mn)$ time.
\end{theorem}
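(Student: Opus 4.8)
The plan is to reduce $\SubSeqMatch_\circ$ directly to $\pSubSeqMatch$ and then invoke \cref{thm:algoMatching}. The guiding idea, already hinted at just before the statement, is that every conjugate of $w$ occurs as a length-$n$ factor of the doubled word $ww$; consequently, asking whether the pattern $v$ is a subsequence of \emph{some} conjugate of $w$ is precisely asking whether $v$ is an $n$-subsequence of $ww$.

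First I would make the correspondence between conjugates and factors precise. For each $i \in [0:n-1]$, the conjugate $w[i+1:n]w[1:i]$ of $w$ equals the factor $(ww)[i+1:i+n]$ of $ww$, and as $i$ ranges over $[0:n-1]$ these exhaust all $n$ conjugates. Conversely, every length-$n$ factor $(ww)[j:j+n-1]$ with $j \in [1:n+1]$ is a conjugate of $w$ (the factor starting at $j=n+1$ merely repeats the one at $j=1$, namely $w$ itself). Hence the set of length-$n$ factors of $ww$ coincides with the set of conjugates of $w$.

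Next I would translate this into the $p$-subsequence language. By \cref{def:circularSubseq}, $v \leq_\circ w$ holds if and only if $v$ is a subsequence of one of the conjugates of $w$, which by the previous step is equivalent to $v$ being a subsequence of some length-$n$ factor of $ww$. Unfolding the definition of $p$-subsequence with $p=n$, this is exactly the assertion $v \leq_n (ww)$: there exists $i \in [1:n+1]$ with $v$ a subsequence of $(ww)[i:i+n-1]$. Therefore the circular instance with pattern $v$ (of length $m$) and text $w$ is a yes-instance if and only if the triple $u = v$, $w' = ww$, $p = n$ is a yes-instance of $\pSubSeqMatch$.

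Finally I would run the algorithm of \cref{thm:algoMatching} on $(u,w',p) = (v,\,ww,\,n)$. Since $|v| = m$ and $|ww| = 2n$, that algorithm terminates in $\mathcal{O}(m \cdot 2n) = \mathcal{O}(mn)$ time, and the doubled word $ww$ is built in $\mathcal{O}(n)$ time, which yields the claimed bound. I do not anticipate a genuine obstacle: the argument is essentially a one-line reduction. The only point warranting slight care is the bookkeeping in the first step — verifying that the range bound $p = n$ captures exactly the conjugates (including the harmless repeated factor at $j=n+1$) and no shorter ranges — but this is immediate from the definition of $\leq_n$.
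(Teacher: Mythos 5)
Your proposal is correct and is exactly the paper's argument: the paper proves this theorem by observing that the conjugates of $w$ are precisely the length-$n$ factors of $ww$, so $v \leq_\circ w$ if and only if $v \leq_n ww$, and then invoking the $\mathcal{O}(mn)$-time algorithm of \cref{thm:algoMatching} on the instance $(v, ww, n)$. Your additional bookkeeping (checking the constraint $m \leq p \leq |ww|$ and the harmless repeated factor at position $n+1$) is fine but not needed beyond what the paper states.
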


And furthermore, the proof of \cref{thm:lowerBoundSubseq} shows that the following statement also holds.
\begin{theorem}\label[Theorem]{thm:lowerBoundSubseqCirc}
$\SubSeqMatch_\circ$ cannot be solved in time $\mathcal{O}(n^h m^g)$, where $h+g= 2-\epsilon$ with $\epsilon>0$, unless $\OVH$ fails. 
\end{theorem}

To conclude this paper, we consider an extension of the $\SubSeqMatch_\circ$ which seems natural to us. We begin by noting that reading (or, more precisely, traversing all the positions of) a circular word $w_\circ$ can be interpreted as reading (respectively, going through) the letters written around a circle, as drawn, for instance, in \cref{fig:circ_word}. So, we can start reading the word at some point on this circle, then go once around the circle, until we are back at the starting point. Then, as in a loop, we could repeat reading (traversing) the word. So, it seems natural to ask how many times do we need to read/traverse a circular word $w_\circ$ until we have that a given word $u$ is a subsequence of the word we have read/traversed.

Clearly, this problem is not well defined, as it depends on the starting point from which we start reading the circular word $w_\circ$. Let us consider an example.  Consider the word $w=ababcc$. Now, if we consider the circular word $w_\circ$, and we start reading/traversing it from position one of $w$ (i.\,e., we start reading $ababcc\cdot ababcc\cdot ababcc\cdot \ldots$) then we need to read/traverse twice the circular word to have that $ca$ is a subsequence of the traversed word. But if we start reading/traversing the circular word on any position $i\geq 2$ of $w$ (e.\,g., we start on position $3$ and read  $abccab\cdot abccab\cdot abccab\cdot \ldots$), then it is enough to traverse the circular word once to have that $ca$ is a subsequence of the traversed word. 

In this setting, there are two natural ways to fix this issue. 

The first one is to define a canonical point of start for the traversal. A natural choice for this starting point is to consider a special position of the word such as, for instance, the positions where a minimal representative of $w_\circ$ occurs. To this end, a conjugate $u=w[i:n]w[1:i-1]$ of a word $w$ of length $n$ is called minimal rotation of $w$ if $u=v^{n/|v|}$ and $(v,n)$ is a minimal representative of $w_\circ$. We obtain the following problem (presented here as a decision problem).
\begin{problemdescription}\label{p:ssqmatchIterated}
  \problemtitle{Iterated Circular Subsequence Matching, $\ItSubSeqMatch_\circ$}
  \probleminput{An integer $\ell$, a word $v$, and a word $w$, which defines the circular word $w_\circ$, over $\Sigma$, with $\len{v} = m$ and $\len{w} = n$, and $m\leq n$.}
  \problemquestion{Decide whether $v \leq u^\ell$, where $u$ is a minimal rotation of $w$.}
\end{problemdescription}
This problem is clearly well defined now, as if $u$ and $u'$ are minimal rotations of $w$, then $u=u'$. Moreover, this problem can be also formulated as a minimisation problem by simply asking for the smallest $\ell$ for which $\ItSubSeqMatch_\circ$ with input $(\ell,v,w)$ can be answered positively. 

The second way to solve the issue identified above is as follows. 
\begin{problemdescription}\label{p:ssqmatchIteratedMin}
  \problemtitle{Best Iterated Circular Subsequence Matching, $\BestItSubSeqMatch_\circ$}
  \probleminput{An integer $\ell$, a word $v$, and a word $w$, which defines the circular word $w_\circ$, over $\Sigma$, with $\len{v} = m$ and $\len{w} = n$, and $m\leq n$.}
  \problemquestion{Decide whether there is a conjugate $u$ of $w$ such that $v \leq u^\ell$.}
\end{problemdescription}
Clearly, this problem can be also formulated as a minimisation problem by simply asking for the smallest $\ell$ for which $\BestItSubSeqMatch_\circ$ with input $(\ell,v,w)$ can be answered positively. 

Our results regarding these two problems are summarized below.
\begin{theorem}\label[Theorem]{thm:ItMatch}
\begin{enumerate}
\item The problem $\ItSubSeqMatch_\circ$ (and the related minimisation problem) can be solved in $\mathcal{O}(\min(n\sigma+m, n + m\log\log n))$ time, where $\sigma=|\Sigma|$.
\item $\BestItSubSeqMatch_\circ$ (and the corresponding minimisation problem) can be solved in $\mathcal{O}(nm)$ time.
\item $\BestItSubSeqMatch_\circ$ cannot be solved in time $\mathcal{O}(n^h m^g)$, where $h+g= 2-\epsilon$ with $\epsilon>0$, unless $\OVH$ fails. 
\end{enumerate}
\end{theorem}
\begin{proof}
Let us first observe that we can assume without loss of generality that $\sigma\leq m+1$. Indeed, if $w$ contains letters which do not occur in $v$, we can simply replace in $w$ those letters by a fresh letter $\$ \notin \al(w)$ and obtain a new word $w' \in (\al(v)\cup\{\$\})^*$. Then, each problem above can be answered positively for the input $(\ell,v,w)$ if and only if the same problem can be answered positively for the input $(\ell,v,w')$. Moreover, we can assume that all the letters of $\Sigma$ occur in $w$; otherwise, we can simply remove from $\Sigma$ the letters that do not occur in $w$ (and answer each of the problem negatively if $v$ contains such a letter). 

Let $w'=ww$. In a preprocessing phase which we use in the algorithms proving statements 1 and 2, we will construct the $n\times \sigma$ matrix $\nextpos[\cdot][\cdot]$, where $\nextpos[i][a] = (\min\{j\mid i<j\leq 2n, w'[j]=a, |w[i+1:j-1]|_a=0\}) \mod n$, for $i\leq n$ and $a\in \Sigma$. Intuitively, $\nextpos[i][a]$ is the first position of the circular word $w_\circ$ where $a$ occurs after position $i$, where the numbering of the positions of $w_\circ$ is defined to coincide with the numbering of the positions of $w$. 

We claim that the elements of the matrix $\nextpos[\cdot][\cdot]$ can be computed in $\mathcal{O}(n\sigma)$ time. For each letter $a\in \Sigma$, we can compute in linear time $\mathcal{O}(n)$ the increasingly ordered list $L_a=j_1<\ldots <j_k$ of all the positions of $w$ where $a$ occurs. Then, for $i\in [1:j_1-1]$ we set $\nextpos[i][a]= j_1$. Further, for $i$ from $2$ to $k-1$, we set $\nextpos[i][a]=j_i$ for $i\in [j_{i-1},j_i-1]$. Finally, for $i\in [j_k:n]$ we set $\nextpos[i][a]= j_1$. Clearly, this computation is correct and can be completed in the stated time.

Now, to solve $ \ItSubSeqMatch_\circ$, we first compute $u=w[i:n]w[1:i-1]$, the minimal rotation of $w$, using the algorithm from \cref{theorem:rep_linear}. Then, we set a variable $count=0$ and a variable $temp=\nextpos[i-1][v[1]]$, and run the following loop, for $j$ from $2$ to $m$:
\begin{enumerate}
\item $temp' \gets \nextpos[temp][v[i]]$
\item if $temp \geq temp' \geq i$ then $count \gets count +1$ 
\item $temp \gets temp'$
\end{enumerate} 
We return $count+1$ as the smallest $\ell$ for which $\ItSubSeqMatch_\circ$ with input $(\ell,v,w)$ can be answered positively.

This process clearly runs in $\mathcal{O}(m)$ time, and we claim that it is correct. Indeed, it simply identifies the letters of $v$, in a greedy fashion, in $u^\omega $. As such our algorithm identifies the shortest prefix of $u^\omega $ which contains $v$ as a subsequence (and this corresponds to the smallest $\ell$ for which $\ItSubSeqMatch_\circ$ with input $(\ell,v,w)$ can be answered positively). Moreover, we use the variable $count$ to count how many times we completely went around the circular words (i.\,e., $count$ is increased every time we pass position $i$ of $w$ in our search for the next letter of $v$). Therefore, it is correct to return $count+1$ as the smallest $\ell$ for which $\ItSubSeqMatch_\circ$ with input $(\ell,v,w)$ can be answered positively. 

So, the algorithm above solves $\ItSubSeqMatch_\circ$ in $\mathcal{O}(n\sigma + m)$ time (including the preprocessing). 

In the case when we want to skip the $\mathcal{O}(n\sigma)$ preprocessing step (where the matrix $\nextpos[\cdot][\cdot]$ is computed), we can proceed as follows. For each list $L_a$, we compute in $\mathcal{O}(|L_a|)$ time a data structure allowing us to do successor searches in $L_a$ in time $\mathcal{O}(\log\log n)$ per query; see \cite{FisGaw}[Proposition 7 in the arXiv version]. Overall, this preprocessing takes $\mathcal{O}(n)$ time. Now, whenever we need to compute $\nextpos[i][a]$ in step 1 of the algorithm solving $\ItSubSeqMatch_\circ$ described above, we proceed as follows. We compute the successor of $i$ in $L_a$. If this successor is defined (i.\,e., there is some $j>i$ in $L_a$), then we return it as the value of $\nextpos[i][a]$. If the successor is not defined (i.\,e., $i$ is greater than all elements of $L_a$), we return the smallest element of $L_a$ as the value of $\nextpos[i][a]$. With this change, the algorithm described above solves $\ItSubSeqMatch_\circ$ in $\mathcal{O}(n + m\log \log n)$ time (including the preprocessing). This completes the proof of the first statement.

We now show the second statement. The approach simply uses the (first) algorithm described above for $\ItSubSeqMatch_\circ$ for each conjugate $u=w[i:n]w[1:i-1]$ of $w$ instead of the minimal rotation. We then simply return the conjugate $u$ of $w$ for which the value $\ell$ computed by the respective algorithm is the smallest. This is clearly correct (as it is an exhaustive search, in the end). The complexity of this procedure is $\mathcal{O}(n\sigma + mn)$ (as the preprocessing is only done once, then we run the loop which runs in $\mathcal{O}(m)$ time from the solution of $\ItSubSeqMatch_\circ$ for each conjugate of $w$, and there are $n$ such conjugates). As $\sigma\leq m+1$, we have that our solution to $\BestItSubSeqMatch_\circ$ runs in $\mathcal{O}(mn)$ time.

To show the third statement (and, as such, the optimality of the algorithm presented above, conditional to $\OVH$), it is enough to note that $\SubSeqMatch_\circ$ is the same as $\BestItSubSeqMatch_\circ$ for the input integer $\ell=1$. Therefore, this statement holds due to \cref{thm:lowerBoundSubseqCirc}.
\end{proof}

As a comment on the previous result, it remains open whether $\ItSubSeqMatch_\circ$ can be solved more efficiently than with our algorithms, whose complexity is given in statement 1 of \cref{thm:ItMatch}. 
\section{Conclusions}
In this paper we have considered a series of classical matching and analysis problems related to the occurrences of subsequences in words, and extended them to the case of subsequences occurring in bounded ranges in words. In general, we have shown that the matching problem, where we simply check if a word is a subsequence of another word, becomes computationally harder in this extended setting: it now requires rectangular time. Similarly, problems like checking whether two words have the same set of subsequences of given length or checking whether a word contains all possible words of given length as subsequences become much harder (i.\,e., NP-hard as opposed to solvable in linear time) when we consider subsequences in bounded ranges instead of arbitrary subsequences. We have also analysed a series of problems related to absent subsequences in bounded ranges of words and, again, we have seen that this case is fundamentally different than the case of arbitrary subsequences. In general, our results paint a comprehensive picture of the complexity of matching and analysis problems for subsequences in bounded ranges.

As an application of our results, we have considered the matching problem for subsequences in circular words, where we simply check if a word $u$ is a subsequence of any conjugate of another word $w$ (i.\,e., is $v$ a subsequence of the circular word $w_\circ$), and we have shown that this problem requires quadratic time. A series of other results regarding the occurrences of subsequences in circular words were discussed, but there are also a few interesting questions which remained open in this setting: What is the complexity of deciding whether two circular words have the same set of subsequences of given length? What is the complexity of checking whether a circular word contains all possible words of given length as subsequences? Note that the techniques we have used to show hardness in the the case of analysis problems for subsequences in bounded ranges of words do not seem to work in the case of circular words, so new approaches would be needed in this case.

\newpage
\bibliographystyle{fundam}
\bibliography{references}

\newpage

%\appendix
%\section*{Appendix}

%\section*{On the Space Complexity of Sliding Window Algorithms for $\pMAS$}

\end{document}